\def\smallsection#1{\smallskip\noindent\textbf{#1}.}
\let\epsilon=\varepsilon 
\newcommand{\RR}{{\mathbb R}}
\newcommand{\NN}{{\mathbb N}}
\newcommand{\TT}{{\mathbb T}}
\newcommand{\ZZ}{{\mathbb Z}}
\newcommand{\QQ}{{\mathbb Q}}
\newcommand{\OpH}{{H_{\lambda, f, T, x}}}
\newcommand{\Hab}{{H_{[a,b]}(x)}}
\newcommand{\Gab}{{G_{x, E, [a,b]}}}
\newcommand{\floor}[1]{\lfloor #1 \rfloor}
\newtheorem{theorem}{Theorem}
\newtheorem{prop}{Proposition}[section]
\newtheorem{definition}{Definition}
\newtheorem{lemma}{Lemma}[section]
\newtheorem{corollary}[prop]{Corollary}
\newtheorem{remark}{Remark}
\newtheorem{claim}{Claim}
\numberwithin{equation}{section}
\title[Schr\"odinger Operators with Monotone Potentials]{Anderson Localization for Schr\"odinger Operators with Monotone Potentials over Circle Homeomorphisms}
\author{Jiranan Kerdboon}
\address[Jiranan Kerdboon]{University of California, Irvine} 
\email{jkerdboo@uci.edu}
\author{Xiaowen Zhu}
\address[Xiaowen Zhu]{University of Washington} 
\email{xiaowenz@uw.edu}
\begin{document}

\begin{abstract}
    In this paper, we prove pure point spectrum for a large class of Schr\"odinger operators over circle maps with conditions on the rotation number going beyond the Diophantine. More specifically, we develop the scheme to obtain pure point spectrum for Schr\"odinger operators with monotone bi-Lipschitz potentials over orientation-preserving circle homeomorphisms with Diophantine or weakly Liouville rotation number. The localization is uniform when the coupling constant is large enough.
\end{abstract}

\maketitle

\section{Introduction}
The spectral theory of quasiperiodic Schr\"odinger operators has been the subject of extensive study over the past several decades due to its deep origins in physics and the richness of its unusual mathematical features. The general setup of a quasiperiodic operator is given by a family of operators $\OpH$ acting on $\ell^2(\ZZ)$, defined as
\begin{equation}
\label{eq: operatorH}
(\OpH\psi)(n) = \psi(n + 1) + \psi(n - 1) + \lambda f(T^nx)\psi(n),
\end{equation}
where $x\in \TT^1$, $T$ is an irrational rotation on $\TT^1$ defined by $Tx = R_\alpha x = x + \alpha$, with $\alpha\in \RR\setminus \QQ$, and $f:\TT^1\to \RR$ is a potential function. Examples of such operators include $f(x) = \cos(x)$ for the almost Mathieu operator or $f(x) = \tan(x)$ for the Maryland model. One of the most interesting features of quasiperiodic operators is that their spectral type can often be fully characterized by the arithmetic properties of $\alpha$ (and/or $x$) in many situations, as demonstrated in works such as \cite{JL16, HJY21}. Since $R_\alpha$ serves as a fundamental example of general circle homeomorphisms, a natural question arises: If $T$ is not a rotation but a more general circle homeomorphism with rotation number $\alpha$, can we still determine or get some information of the spectral type by the arithmetic properties of $\alpha$?


As one can imagine, the answer may vary depending on properties of $f$, $T$, and $\alpha$. The study of \eqref{eq: operatorH} for general circle diffeomorphisms $T$ was initiated by \cite{JK21} and \cite{K20}. In \cite{JK21}, the authors proved purely continuous spectrum for $f$ that is H\"older continuous, $T$ that is $C^{1 + \operatorname{BV}}$-smooth, and $\alpha$ that is super Liouville. \cite{K20} further explored similar phenomena for circle diffeomorphisms with a critical point or break. On the other hand, for quasiperiodic \eqref{eq: operatorH}, a number of recent papers have proved the opposite, i.e., pure point spectrum, under arithmetic properties of $\alpha$ that go beyond the Diophantine condition, e.g., \cite{L15,JL16,AYZ17,JL18,JLZ20,L20,HJY21,GYZ21,L22}. In this paper, we add to this list by extending the results in \cite{JK15}, where the authors worked with an irrational rotation $T = R_\alpha$ with Diophantine $\alpha$ and potential $f$ satisfying conditions \ref{f1} and \ref{f2} below. We work with the same conditions on $f$ but consider a more general orientation-preserving circle homeomorphism $T$ under the assumptions \ref{T1}:
\begin{enumerate}[label=($\mathcal{F}$\arabic*)]
    \item $f$\label{f1} is one-periodic on $\RR$ and $f(0) = 0$, $f(1 - 0)  := \lim_{x\to1^-} f(x) = 1$.
    \item $f$\label{f2} is bi-Lipschitz monotone, i.e., there exist $\gamma_-, \gamma_+>0$ such that for all $0\leq x<y<1$,
    \[\gamma_-(y - x) \leq f(y) - f(x) \leq \gamma_+(y - x).\]  
\end{enumerate}
\begin{enumerate}[label=($\mathcal{T}$\arabic*)]
    \item \label{T1} Assume the invariant measure of $T$ is denoted by $\nu$ and that 
    \[C_-\nu([x,y])\leq |x - y| \leq C_+ \nu([x,y]).\]
\end{enumerate}

Under these conditions, we obtain results that are similar to the ones in \cite{JK15}. In fact, in addition to extending to more general circle homeomorphisms, we also generalize the result by relaxing the Diophantine condition on $\alpha$ to both weakly Liouville and Diophantine case. Specifically, we prove that
\begin{theorem}[pure point spectrum]
\label{thm: pure_point}
For $f$ satisfying \ref{f1}, \ref{f2} and $T$ satisfying \ref{T1} with weakly Liouville or Diophatine rotation number $\alpha$, or more specifically, $0\leq \beta(\alpha)<\infty$, there is $C_0 = C_0(\gamma_\pm,C_\pm) = O\left(\frac{\gamma_-C_-}{\gamma_+C_+}\right) > 0$ such that for all $\lambda>0$, we have 
\[\sigma_c(\OpH) \cap \left\{E:\beta(\alpha)<C_0L(E;\alpha)\right\} = \emptyset,\quad \forall x\in \TT^1,\]
where $\beta(\alpha)$ and the Lyapunov exponent $L(E;\alpha)$ are defined in Sec \ref{sec: preliminaries}.
\end{theorem}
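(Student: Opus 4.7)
The plan is to follow the general scheme of \cite{JK15} for proving absence of continuous spectrum on the relevant energy set, adapted to the broader setting of circle homeomorphisms satisfying \ref{T1} and to the weakly Liouville regime $0 \leq \beta(\alpha) < \infty$. By a Schnol-type argument, it suffices to establish that every polynomially bounded generalized eigenfunction of $\OpH$ corresponding to an energy $E$ with $\beta(\alpha) < C_0 L(E;\alpha)$ decays exponentially. The backbone is the transfer matrix formalism together with quantitative Green's function estimates on finite boxes $[a,b]$: given good control on $\|G_{x,E,[a,b]}\|$ and its matrix elements, one can iterate to propagate decay outward from a center of mass of $\psi$.

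First I would set up the transfer matrices $M_n(x,E)$ and recall that $L(E;\alpha)$ governs their asymptotic growth $\nu$-almost everywhere. The next step is a large-deviation estimate: the set of $x \in \TT^1$ where $\frac{1}{n}\log\|M_n(x,E)\|$ deviates from $L(E;\alpha)$ should be exponentially small in $\nu$. Here condition \ref{T1} plays a dual role; it lets us interchange $\nu$ and Lebesgue measure up to the constants $C_\pm$, and, via the topological conjugacy $h$ between $T$ and $R_\alpha$ (where $h_*\nu$ equals Lebesgue), lets us import standard equidistribution facts for $\{n\alpha\}$ into statements about the orbit $\{T^n x\}$. I would then quantitatively relate the separation $\min_{1 \le j \le q_k}|T^j x - x|$ to the corresponding best-denominator gap for the rotation, using \ref{T1}, so that the continued-fraction machinery of $\alpha$ can be applied.

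The crucial step is the elimination of resonances. Using monotonicity \ref{f2}, the map $x \mapsto \lambda f(T^n x)$ inherits bi-Lipschitz structure along orbits, so for a finite box $[a,b]$ the eigenvalues of $H_{[a,b]}(x)$ depend in a controlled monotone fashion on the orbit translates, and $|E - E_j(a,b)|$ can be bounded below in terms of how well the orbit avoids a ``bad'' set of measure $e^{-\eta(b-a)}$. Choosing box scales $b_k$ comparable to the best denominators $q_k$ of $\alpha$, one obtains ``regular'' boxes away from a sparse set of translates, while resonant boxes occur with frequency governed by $\beta(\alpha)$. The inequality $\beta(\alpha) < C_0 L(E;\alpha)$ is precisely what makes the exponential Green's function decay at scale $q_k$ dominate the sub-exponential clustering of resonances at scale $q_{k+1}$, which gives a net exponential decay of $\psi$. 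The constant $C_0 = O(\gamma_- C_-/\gamma_+ C_+)$ emerges by tracking how the bi-Lipschitz distortion in \ref{f2} and \ref{T1} enters each side of this balance.

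The main obstacle I anticipate is faithfully transferring the arithmetic/combinatorial apparatus, which is native to $R_\alpha$, over to the nonlinear dynamics $T$ without losing sharp constants. In particular, the standard Gordon-type covering lemmas and the pigeonhole estimates on orbit returns must be reformulated through the invariant measure $\nu$ and then converted back to metric statements via \ref{T1}; the constants $C_\pm$ propagate into every scale and must not compound across the infinitely many scales $q_k$. A secondary difficulty is handling the boundary cases allowed by $0 \leq \beta(\alpha) < \infty$ uniformly: in the Diophantine regime one has large slack, but in the weakly Liouville regime the resonant scales are densely interleaved and require a careful two-scale induction with sharp bookkeeping of losses arising from \ref{T1}. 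Once these are in place, the rest of the argument—passing from Green's function control to exponential decay of generalized eigenfunctions, and hence to $\sigma_c(\OpH)=\emptyset$ on the stated energy set—follows the classical template.
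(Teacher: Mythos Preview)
Your proposal is correct and follows essentially the same route as the paper: Schnol reduction, a large deviation estimate for $|P_{q_k}(x,E)|$ obtained via the monotone box eigenvalue functions (whose bi-Lipschitz control in the invariant measure under \ref{f2} and \ref{T1} is the technical heart), a regular/singular dichotomy, and the ``either/or'' repulsion of singular points established by pigeonholing the $\leq q_k$ bad intervals against orbit return times governed by $\beta(\alpha)$. The only cosmetic difference is that the paper does not invoke Gordon-type covering lemmas or a two-scale induction per se; the weakly Liouville case is handled by the same single-scale either/or argument at scale $q_k$, with the condition $\beta(\alpha)<C_0L(E)$ entering exactly once in the comparison $e^{-C_0\delta q_k}\geq \nu([T^ix,T^jx])\geq 1/q_{k+1}$.
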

\begin{remark}
The theorem provides a meaningful statement for homeomorphisms with rotation number $\alpha$ when $\beta(\alpha)$ is small or zero, which corresponds to weakly Liouville or Diophantine $\alpha$ (see Sec \ref{sec: preliminaries}). In fact, the smaller $\beta(\alpha)$ is, the more ``irrational'' $\alpha$ is. For example, since we also proved positivity of Lyapunov exponent $L(E;\alpha)>0$ for all irrational $\alpha$ in Corollary \ref{cor: positive_Lyapunov}, when $\beta(\alpha) = 0$, this implies that $\sigma_c(H_{f,T,x}) = \emptyset$, i.e. the spectrum of $H_{f,T,x}$ is pure point. 
\end{remark}
Note that condition \ref{T1} is equivalent to the existence of a bi-Lipschitz conjugacy between $T$ and $R_\alpha$, meaning that there exists a bi-Lipschitz function $\phi$ that is bounded from above and below such that $\phi\circ T = R_\alpha\circ \phi$. We acknowledge that if such a bi-Lipschitz conjugacy exists and $\alpha$ is Diophantine, our results follow directly from \cite{JK15} by a change of variables: letting $y = \phi(x)$, we obtain $H_{f\circ{\phi^{-1}}, R_\alpha,y}$ and can apply known localization results. However, we choose to present the proof in the more general setting using the invariant measure, since the existence of a bi-Lipschitz (in fact, $C^{1 + \epsilon}$) conjugacy is currently only established for Diophantine $\alpha$.

 
As a byproduct, we also establish Lipschitz continuity of the integrated density of states (IDS, see \eqref{eq: IDS}) for all $\lambda$ in Lemma \ref{lemma: Lipschitz_N}, as well as the continuity and positivity of the Lyapunov exponent for large $\lambda$ in Corollary \ref{cor: positive_Lyapunov}. Together with our key lemma \ref{lemma: exponentiallydecay}, which is uniform in $x, E$, and $\alpha$, these results allow us to achieve uniform localization of $\OpH$ (see Definition \ref{def: uniform_localization}) for sufficiently large $\lambda$ and "sufficiently 
irrational" $\alpha$, i.e. $\beta(\alpha)$ sufficiently small.
\begin{theorem}[Uniform localization]
\label{thm: uniform_localization}
    Let $C_0 = C_0(\gamma_\pm,C_\pm)>0$. If $\lambda>\frac{4 e}{\gamma_-C_-}$ and if $\alpha$ is weakly Liouville with $\beta(\alpha)<C_0\ln\left(\frac{\lambda\gamma_-C_-}{4e}\right)$, then $\OpH$ has uniform localization for all $x$. 
\end{theorem}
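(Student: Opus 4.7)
\medskip
\noindent\textbf{Proof proposal.}
My plan is to assemble three ingredients already in hand: (i) a uniform (in $E$ on the spectrum) lower bound on the Lyapunov exponent, (ii) the arithmetic smallness of $\beta(\alpha)$ in the hypothesis, and (iii) the uniform-in-$(x,E,\alpha)$ exponential decay estimate of Lemma~\ref{lemma: exponentiallydecay}. The first two ingredients together force pure point spectrum via Theorem~\ref{thm: pure_point}, while the third upgrades this to uniform localization with an explicit, $(x,E)$-independent decay rate.

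The first step is to produce the quantitative Lyapunov bound. Using monotonicity \ref{f2} together with the push-forward of Lebesgue measure by the invariant-measure conjugacy \ref{T1}, a Herman-type (subharmonicity / Jensen) estimate applied to the transfer cocycle of $\OpH$ yields
\[
L(E;\alpha)\;\geq\;\ln\!\left(\frac{\lambda\gamma_- C_-}{4e}\right)
\]
for every $E\in\RR$, and in particular for every $E$ in the spectrum. This is exactly the content one expects from Corollary~\ref{cor: positive_Lyapunov} with an explicit constant; the factor $4e$ arises from the standard logarithmic integration of $|f-\text{const}|$ combined with the bi-Lipschitz distortion $C_\pm,\gamma_\pm$. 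The hypothesis $\lambda>4e/(\gamma_- C_-)$ is precisely what makes the right-hand side positive, so $L(E;\alpha)>0$ uniformly in $E$.

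Combining this lower bound with the hypothesis $\beta(\alpha)<C_0\ln(\lambda\gamma_- C_-/(4e))$ gives $\beta(\alpha)<C_0 L(E;\alpha)$ for \emph{every} $E\in\RR$. By Theorem~\ref{thm: pure_point} this implies $\sigma_c(\OpH)=\emptyset$ for all $x\in\TT^1$, so the spectrum is pure point. To promote pure point spectrum to uniform localization in the sense of Definition~\ref{def: uniform_localization}, I will feed the lower bound on $L(E;\alpha)$ into Lemma~\ref{lemma: exponentiallydecay}: since the conclusion of that lemma is uniform in $x$, $E$, and $\alpha$, it produces a single exponential decay rate $\gamma=\gamma(\lambda,\alpha)>0$ depending only on $L(E;\alpha)-\beta(\alpha)/C_0$, which by the previous step is bounded below by a positive constant independent of $E$ and $x$. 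Applying this to any eigenfunction at any energy in $\sigma(\OpH)$ yields the required uniform exponential decay.

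\medskip
\noindent\textbf{Main obstacle.}
The delicate point is not the pure point conclusion itself, which is almost immediate once the Lyapunov bound is available, but the uniformity of the decay rate across the entire spectrum. Concretely, one must check that the constants hidden in Lemma~\ref{lemma: exponentiallydecay} depend on $E$ only through $L(E;\alpha)-\beta(\alpha)/C_0$ and do not degenerate near spectral edges or near resonant energies; this is where the explicit, $E$-uniform Lyapunov lower bound in Step~1 is essential, and where the Lipschitz continuity of the IDS (Lemma~\ref{lemma: Lipschitz_N}) is used to control the density of resonant energies when converting Green's function decay into decay of eigenfunctions.
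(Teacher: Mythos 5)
Your proposal is correct and follows essentially the same route as the paper: the uniform lower bound $L(E)\geq \ln\left(\frac{\lambda\gamma_-C_-}{2e}\right)\geq \ln\left(\frac{\lambda\gamma_-C_-}{4e}\right)$ of Corollary \ref{cor: positive_Lyapunov} makes $\beta(\alpha)<C_0L(E)$ hold for every $E$, and then Lemma \ref{lemma: exponentiallydecay} applied with a fixed small $\epsilon$ (smaller than the gap $\ln\left(\frac{\lambda\gamma_-C_-}{4e}\right)-\frac{\beta(\alpha)}{C_0}>0$) gives a decay rate and constant uniform in $x$ and $E$, which is exactly uniform localization. One caveat: the bound in Corollary \ref{cor: positive_Lyapunov} is obtained from the Thouless formula together with the Lipschitz continuity of the IDS, not from a Herman-type subharmonicity argument (which would not apply to bi-Lipschitz monotone $f$ over a circle homeomorphism), but since you ultimately invoke the corollary itself this does not affect the validity of your argument.
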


We finally mention that a somewhat different proof was developed in \cite{K19} for unbounded lower-Lipschitz monotone $f$ and irrational rotation $T = R_\alpha$ with Diophantine $\alpha$. The key idea is, instead of controlling the change of eigenvalue functions horizontally (see Lemma \ref{lemma: horizontal_distance}), the author controls the change of counting function horizontally. We believe that the Lipschitz continuity of integrated density of states in our proof can be done through that of argument in \cite{K19} and the results there can be generalized to more general $T$ with weakly Liouville $\alpha$ in similar ways to here. This will be explored in a future work.



\smallsection{Structure and key ideas}
Under the assumption of \ref{T1} and allowing weakly Liouville $\alpha$, we re-develop the proof following the method in \cite{JK15} in the key step: We use the non-perturbative proof of localization, first developed in \cite{J99}, together with a detailed analysis of the behavior of box eigenvalues. We provide the latter for general $T$ in Sec \ref{sec: eigenvalue_functions}, which helps with building the large deviation estimates in Sec \ref{sec: LDT}. From the large deviation estimates, we get our key lemma on the uniform exponential decay of generalized eigenfunctions in $x, \alpha, E$ in Sec \ref{sec: localization}. The main results follow immediately in Sec \ref{sec: uniform_localization}. 

The extension of our results from $R_\alpha$ to more general circle homeomorphisms $T$ is based on the observation that the behavior of box eigenvalues is closely related to the distribution of orbits of $T$. While the orbits of $T$ are not evenly distributed with respect to distance, they are evenly distributed with respect to the invariant measure, allowing us to obtain quantitative estimates on their distribution under the comparability assumption \ref{T1}. Appendix \ref{app: orbits} provides the key statements that enable us to carry out this argument. The extension to weakly Liouville $\alpha$, on the other hand, requires a more careful estimate of the decay of generalized eigenfunctions in Sec \ref{sec: localization}.

\section{preliminaries}
\label{sec: preliminaries}
In this section, we will begin by discussing two fundamental concepts: continued fraction expansion and weakly Liouville numbers. Afterward, we will introduce several fundamental properties of discrete Schr\"odinger operators, including the generalized eigenvalue and Schnol's theorem, the Green function and Poisson formula, transfer matrices and Lyapunov exponent, density of states measure, and the Thouless formula.

\smallsection{Notations}
For $x\in \RR$, we use $|x|$ to denote the absolute value and $\Vert x\Vert = \inf\limits_{n\in\ZZ}|x - n|$ to denote the closest  distance between $x\in \RR$ and integers.

\smallsection{Continued fraction expansion and weakly Liouville number}
Any number $\alpha \in [0,1)$ can be written in the continued fraction expansion \cite{V90}:
\begin{align*}
    \alpha = \frac{1}{a_1 +\frac{1}{a_2 + \frac{1}{a_3 + \ldots}}} :=[a_1,a_2,a_3, \ldots].
\end{align*}
with $a_k \in \NN^+$. Let $\frac{p_n}{q_n} = [a_1, \dots, a_n]$ denote the continued fraction approximants. They satisfy 
\begin{equation} \label{eq: recurence_relation}
\begin{split}
p_k & = a_kp_{k-1}+p_{k-2}, ~ p_{-1} = 1, ~ p_0 = 0;\\
q_k & = a_kq_{k-1}+q_{k-2}, ~ q_{-1} = 0, ~ q_0 = 1.
\end{split}
\end{equation}
\begin{definition}[weakly Liouville]
\label{def: weakly_Liouville}
    For $\alpha \in [0,1)$, let
    \[
        \beta(\alpha) = \limsup\limits_{k\to \infty} \frac{\ln q_{k+1}}{q_k}.
    \]
    We call $\alpha$ weakly Liouville if $0<\beta(\alpha)<\infty$. 
\end{definition}
We mention that if $\alpha$ is Diophantine\footnote[1]{$\alpha$ is called \textit{Diophantine} if there is $\kappa>0$ and $\tau>0$ such that $\Vert n\alpha\Vert >\frac{\kappa}{|n|^\tau}$ for all $n$.}, then $\beta(\alpha) = 0$. 
\medskip

For a detailed discussion on the next several definitions, please refer to  \cite[Ch.9,10]{CFKS09} and \cite[Ch.VII]{B68}.

\smallsection{Generalized eigenfunction and Schnol's theorem}
We say $\psi$ is a \textit{generalized eigenfunction} of an operator $H$ with respect to a \textit{generalized eigenvalue} $E$ if $\psi$ is polynomially bounded, i.e. $|\psi(n)|\leq C(1 + |n|)^p$ for some $C>0$, $p\in \NN$ and $H\psi = E\psi$. Schnol's theorem  states that the spectral measure of an operator $H$ is supported on the set of its generalized eigenvalues. 

According to Schnol's theorem, to prove that $H$ has pure point spectrum, it is sufficient to show that all generalized eigenfunctions belong to $\ell^2$. This is because if all generalized eigenfunctions and eigenvalues become eigenfunctions and eigenvalues, respectively, then the spectrum is pure point.

\medskip 

\smallsection{Green function and Poisson formula}
Let $\Hab$ and $\tilde H_{[a,b]}$ denote the restriction of $\OpH$ to $\ell^2([a,b])$ with Dirichlet and periodic boundary conditions, respectively. In particular, for the interval $[a,b] = [0,n-1]$, we use the simplified notations $H_n(x)$ and $\tilde{H}_n(x)$. More specifically, 
\[
\begin{split}
    &H_n(x) =
   \begin{pmatrix}
\lambda f(x) & 1 & & \\
1 & \ddots & \ddots & \\
& \ddots & \ddots & 1 \\
 & & 1 & \lambda f(T^{n-1}x)
\end{pmatrix}_{n\times n,} \\
& \tilde{H}_n(x) =
  \begin{pmatrix}
\lambda f(x) & 1 & & 1\\
1 & \ddots & \ddots & \\
& \ddots & \ddots & 1 \\
1 & & 1 & \lambda f(T^{n-1}x)
\end{pmatrix}_{n\times n.}
\end{split}
\]

Let $\Gab = (\Hab - E )^{-1}$ denote the Green function, and let $\Gab(m,n)$ be its $(m,n)$-entry. Denote $P_n(x,E) =\det(H_n(x)-E)$, and let $P_0(x,E)=1$.

The \textit{Poisson formula} provides a connection between the generalized eigenfunction and the Green function. Specifically, suppose $\psi(n)$ is a generalized eigenfunction of $\OpH$ with respect to generalized eigenvalue $E$, then for $n$ in the interval $[a,b]$, we have the following formula:
\begin{equation}
  \label{Eq: Poisson_formula}
    \psi(n) = -\Gab(a,n) \psi(a - 1) - \Gab(n,b) \psi(b + 1).
\end{equation}

\smallsection{Transfer matrix and Lyapunov exponent}
Rewrite $\OpH \psi = E\psi$ into matrix form:
\[
\begin{pmatrix}
\psi_{n}\\
\psi_{n-1}
\end{pmatrix}
= A_{n-1}(x,E) 
\begin{pmatrix}
\psi_{n-1}\\
\psi_{n-2}
\end{pmatrix}
= A_{n-1}(x,E)\dots A_{0}(x,E)
\begin{pmatrix}
\psi_{0}\\
\psi_{-1}
\end{pmatrix},
\]
where
\[
A_i(x,E) := 
\begin{pmatrix}
E-\lambda f(T^{i}x) & -1\\
1& 0
\end{pmatrix}.
\]
We define the \textit{$n$-step transfer matrix} by \[M_n(x,E) := A_{n-1}(x,E)\dots A_{0}(x,E).\]
One can verify by induction that
\begin{equation}
    \label{eq: M_and_P}
    M_n(x,E) := \begin{pmatrix}
    P_n(x,E) & -P_{n-1}(Tx,E)\\
    P_{n-1}(x,E)& -P_{n-2}(Tx,E)
    \end{pmatrix}.
\end{equation}
The \textit{Lyapunov exponent} is defined to be 
\begin{equation}
    \label{eq: def_of_Lya}
    L(E) := \lim_{n \to \infty}\frac{1}{n}\int_{0}^{1}  \ln \Vert M_n(x,E) \Vert \ d\nu(x).
\end{equation}
\smallsection{Integrated density of states (IDS) and the Thouless formula}
Next, we introduce the density of states measure and the Thouless formula, which connects the Lyapunov exponent of $E$ with the density of states measure. The \textit{integrated density of states} (IDS) is defined as follows:
\begin{equation}
    \label{eq: IDS}
    N(E) := \lim_{n \to \infty}\frac{1}{n}\int_0^1 N_n(x,E) \ d\nu(x), 
\end{equation}
where $N_n(x,E) := \#\sigma(H_n(x)) \cap (-\infty, E]$.
\begin{remark}
We can define $\tilde P_n(x,E)$ and $\tilde N_n(x,E)$, analogous to $P_n(x,E)$ and $N_n(x,E)$ for $H_n$, respectively, for $\tilde H_n(x)$.
\end{remark}
\begin{remark}
Notice that $\tilde H_n(x)$ is a rank-two perturbation of $H_n(x)$. Thus we have $|\tilde N_n(x,E) - N_n(x,E)|\leq 2$.  Thus we can also define the IDS by
\begin{equation}
    \label{eq: IDSbytildeN}
    N(E) = \lim_{n \to \infty}\frac{1}{n}\int_0^1 \tilde N_n(x,E) \ d\nu.
\end{equation}
\end{remark}

The function $N(E)$ is right-continuous, non-decreasing, and approaches zero as $E$ approaches $-\infty$. Its derivative defines a unique probability measure, called the \textit{density of states measure} $N(dE)$. The relation between the density of states measure $N(dE)$ and the Lyapunov exponents $L(E)$ is known as the \textit{Thouless formula}. We state it here without proof, but refer the interested reader to \cite{CFKS09} for more details:
\begin{equation}
\label{Thouless_formula}
    L(E) = \int_\RR \ln |E'-E| \,N(dE).
\end{equation}




\section{Positive Lyapunov Exponent}
\label{sec: eigenvalue_functions}
In this section, we first establish some fundamental properties of box eigenvalue functions, which are the eigenvalues of $\tilde H_n(x)$. We then derive estimates for the distance between these eigenvalue functions. Using these estimates, we obtain the Lipschitz continuity of the IDS $N(E)$ with respect to $E$ and prove the positivity of the Lyapunov exponent $L(E)$ for large $\lambda$.

Recall that $\tilde{H}_n(x)$ is the periodic restriction of $\OpH$ to $[0,n-1]$. Let $\tilde{\mu}_m(x)$, $0\leq m\leq n-1$ be the eigenvalues of $\tilde{H}_n(x)$ in increasing order. We refer to $\tilde{\mu}_m(x)$ as the \textit{box eigenvalue functions}. Now we establish some of their basic properties:
\begin{prop} 
\label{prop: properties_of_eigenvalue_function}
 $\tilde{\mu}_i(x)$ have the following properties:
\begin{enumerate}
    \item $\tilde{\mu}_i(x)$ is $1$-periodic, continuous on $[0,1)$ except at $\{T^{-j}(0)\}_{j = 0}^{n-1}$. By rearranging these discontinuity points in an increasing order, we denote them by $\{\beta_l\}_{l = 0}^{n-1}$. We also denote $I_l := [\beta_l,\beta_{l + 1})$.
    \item \label{prop: Lip_ev} $\tilde{\mu}_i(x)$ is bi-Lipschitz continuous with respect to the invariant measure, and strictly increasing on each $I_l$. In fact, 
    \begin{equation}
        \label{eq: Lip_ev}
        \lambda \gamma_-C_-\nu([x,y]) \leq \mu_i(y) - \mu_i(x)\leq \lambda\gamma_+C_+\nu([x,y]).
    \end{equation}
      \item \label{prop: jump}
    At each jump $\beta_l$, we have 
    \[\tilde{\mu}_i(\beta_l-0) \leq \tilde{\mu}_{i+1}(\beta_l)  \leq \tilde{\mu}_{i+1}(\beta_l-0), \quad 0 \leq i \leq n-2.\]
\end{enumerate}
\end{prop}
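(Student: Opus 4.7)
All three properties stem from examining how the diagonal entries $\lambda f(T^j x)$ of $\tilde H_n(x)$ depend on $x$. For (1), $f$ is $1$-periodic and continuous on $(0,1)$ with a single jump at the origin, so each entry $\lambda f(T^j x)$ is continuous in $x$ except where $T^j x \in \ZZ$, i.e.\ where $x \in T^{-j}(0)$. Continuous dependence of the eigenvalues of a symmetric matrix on its entries then gives continuity of $\tilde{\mu}_i$ on $[0,1)$ off the finite set $\{T^{-j}(0)\}_{j=0}^{n-1}$, and periodicity is inherited from $f$.

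For (2), I would fix $I_l = [\beta_l, \beta_{l+1})$ and $x < y$ in $I_l$. By the definition of $I_l$, no orbit point crosses $0$ as the argument ranges over $[x,y]$, so since $T$ is an orientation-preserving homeomorphism, each $T^j$ restricts to an orientation-preserving homeomorphism of $[x,y]$ onto an arc $[T^j x, T^j y] \subset (0,1)$. Hence $\tilde H_n(y) - \tilde H_n(x)$ is diagonal with entries $\lambda(f(T^j y) - f(T^j x))$, and combining (F2), the $T$-invariance of $\nu$ (which gives $\nu([T^j x, T^j y]) = \nu([x,y])$), and the two-sided comparison (T1) yields
\[
\lambda\gamma_- C_-\,\nu([x,y]) \;\leq\; \lambda\bigl(f(T^j y) - f(T^j x)\bigr) \;\leq\; \lambda\gamma_+ C_+\,\nu([x,y])
\]
for every $j$. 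This promotes to the operator inequality $\lambda\gamma_- C_- \nu([x,y])\, I \leq \tilde H_n(y) - \tilde H_n(x) \leq \lambda\gamma_+ C_+ \nu([x,y])\, I$, and the min-max characterization of ordered eigenvalues converts it into the bi-Lipschitz estimate \eqref{eq: Lip_ev}; strict monotonicity on $I_l$ is the strict positivity of the lower bound.

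For (3), at $x = \beta_l$ there is a unique index $j_0 \in \{0,\dots,n-1\}$ with $T^{j_0}\beta_l = 0$ (the orbit of $0$ is injective because $T$ has irrational rotation number), and all other entries $\lambda f(T^j x)$ remain continuous at $\beta_l$. As $x \uparrow \beta_l$ the distinguished entry rises to $\lambda f(1-0) = \lambda$, while at $\beta_l$ it equals $\lambda f(0) = 0$, so $\tilde H_n(\beta_l) = \tilde H_n(\beta_l - 0) - \lambda\, e_{j_0} e_{j_0}^*$ --- a rank-one negative perturbation. Cauchy interlacing for rank-one perturbations then gives $\tilde{\mu}_i(\beta_l) \leq \tilde{\mu}_i(\beta_l - 0) \leq \tilde{\mu}_{i+1}(\beta_l) \leq \tilde{\mu}_{i+1}(\beta_l - 0)$, which contains the claimed inequality.

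The only nontrivial step is (2), where one must use $T$-invariance of $\nu$ together with (T1) to relate $|T^j y - T^j x|$ to $\nu([x,y])$ uniformly in $j$; this is precisely where the assumption (T1) enters and is the ingredient that lets the argument pass from the pure rotation $R_\alpha$ in \cite{JK15} to a general homeomorphism $T$.
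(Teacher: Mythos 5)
Your proposal is correct and follows essentially the same route as the paper: continuity of $\tilde{\mu}_i$ via continuous dependence on the entries $\lambda f(T^jx)$, the bi-Lipschitz bound via bounding the diagonal perturbation $\tilde H_n(y)-\tilde H_n(x)$ between $\lambda\gamma_-C_-\nu([x,y])I$ and $\lambda\gamma_+C_+\nu([x,y])I$ (using \ref{f2}, \ref{T1}, and $T$-invariance of $\nu$) and then eigenvalue monotonicity (the paper cites Lidskii, you use min-max --- same estimate), and the jump inequality from the rank-one relation $\tilde H_n(\beta_l-0)-\tilde H_n(\beta_l)=\lambda e_{j_0}\otimes e_{j_0}$. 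The only cosmetic difference is in (3): you invoke the rank-one update interlacing $\tilde\mu_i(A)\leq\tilde\mu_i(A+\lambda vv^*)\leq\tilde\mu_{i+1}(A)$ directly, while the paper deletes the $j_0$-th row and column and applies Cauchy (submatrix) interlacing to both one-sided limits --- an equivalent argument.
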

\begin{remark}
\label{rmk: Lambda_j}
Because of \eqref{prop: Lip_ev} and \eqref{prop: jump} above, it is natural to define 
\[\Lambda_i(x):= \sum_{l = 0}^{n - 1}\mu_{i + l}(x)\chi_{I_l}(x), \quad 0\leq i \leq n - 1\]
and extend it periodically from $[0,1)$ to $\RR$. As a result, $\Lambda_j(x)$ is monotone increasing on $[\beta_{n - j + 1} + N, \beta_{n - j + 1} + N + 1)$ for any $N\in \ZZ$, and it inherits the properties of $\mu_j(x)$ on each $I_l$. In particular, $\Lambda_j(x)$ is also lower-Lipschitz with respect to invariant measure $\nu$ on $[\beta_{n - j + 1} + N, \beta_{n - j + 1} + N + 1)$:
\[
    \Lambda_i(y) - \Lambda_i(x) \geq \lambda \gamma_-C_-\nu([x,y]), \quad \text{~for~}x<y\text{~and~}x,y\in [\beta_{n - j + 1} + N, \beta_{n - j + 1} + N + 1).
\]
\end{remark}
\begin{proof}\leavevmode
\begin{enumerate}
    \item Note that the box eigenvalue functions $\tilde\mu_i(x)$ are roots of the characteristic polynomial $\tilde P_n(x,E)$. Therefore, each $\tilde\mu_i(x)$ is continuous with respect to the coefficients of $\tilde P_n(x,E)$, which are polynomials of $\{\lambda f(T^jx)\}_{j = 0}^{q_k -1}$. Since $f(T^j x)$ is only discontinuous at $T^{-j}(x)$, $\tilde\mu_i(x)$ is only potentially discontinuous at $\{T^{-j}(x)\}_{j = 0}^{q_k - 1}$.
    \item Notice that for $x<y$ in the same $I_l$, $\tilde H(y) - \tilde H(x)$ is a non-negative diagonal matrix. By Lidskii's theorem (see \cite[Theorem 2]{JKK20}), we have
    \[
        \lambda\min\limits_j  \left(f(T^j y) - f(T^j y)\right)\leq \mu_j(y) - \mu_j(x) \leq \lambda \max\limits_j \left(f(T^j y) - f(T^j x)\right).
    \]
    Notice that  \ref{f2} and \ref{T1} implies that  
    \[f(T^j y) - f(T^j x) \leq \gamma_+C_+\nu([T^jx,T^jy]),\]
    and similarly, $f(T^j y) - f(T_j x)\geq \lambda\gamma_-C_-\nu([x,y])$.
    \item Notice that 
    \begin{equation}
    \label{eq: rank_one}
        \tilde{H}_n(\beta_l-0) -  \tilde{H}_n(\beta_l) = \lambda e_j\otimes e_j
    \end{equation}
   where $0\leq j \leq q_k - 1$ such that $T^{-{(j-1)}}(x) = \beta_l$.
    This leads to the second inequality since $\tilde{H}_n(\beta_l-0) -  \tilde{H}_n(\beta_l)$ is positive semi-definite. To derive the first inequality, by \eqref{eq: rank_one}, let $D$ be the matrix obtained by deleting the row $j$ and column $j$ from  $\tilde{H}_n(\beta_l-0)$ or  $\tilde{H}_n(\beta_l)$. Let $\omega_1 \leq \omega_2 \leq \ldots \leq \omega_{n-1}$ be the eigenvalues of $D$. By eigenvalue interlacing theorem,
    \begin{equation}
    \begin{split}
       \ \ \ \ &\tilde{\mu}_0(\beta_l-0) \leq \omega_1 \leq \tilde{\mu}_1(\beta_l-0) \leq \omega_2 \leq \dots \leq \omega_{n-1} \leq \tilde{\mu}_{n-1}(\beta_l-0),\\
       \ \ \ \ &\tilde{\mu}_0(\beta_l) \leq \omega_1 \leq \tilde{\mu}_1(\beta_l) \leq \omega_2 \leq \dots \leq \omega_{n-1} \leq \tilde{\mu}_{n-1}(\beta_l).
    \end{split}    \end{equation}
   Therefore, $ \tilde{\mu}_{m}(\beta_l-0) \leq \omega_{m+1} \leq \tilde{\mu}_{m+1}(\beta_l)$, for all $0 \leq m \leq n-2$.
\end{enumerate}
\end{proof}

\smallsection{Horizontal comparison}
From now on, we fix $\alpha$, and consider $n=q_k$ since we will use the dynamical properties of the irrational circle map to compare box eigenvalue functions horizontally and vertically. The following lemma provides an upper bound control if we compare the box eigenvalue functions $\tilde{\mu}_i(x)$ and $\tilde{\mu}_i(T^rx)$ horizontally. Note that the estimate is uniform in $r$.
\begin{lemma}
\label{lemma: horizontal_distance}
For any $-q_k + 1\leq r\leq q_k-1$,
\begin{align*}
   \left| {\tilde{\mu}}_i(x) -{\tilde{\mu}}_i(T^{r}x)  \right| \leq \frac{\lambda\gamma_+C_+}{ q_{k+1}}.
\end{align*}
\end{lemma}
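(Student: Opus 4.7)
The plan is to exploit the cyclic symmetry of the periodic operator $\tilde H_n$ with $n=q_k$ so that $\tilde H_n(T^r x)$ differs from a unitary conjugate of $\tilde H_n(x)$ only in a handful of diagonal entries, and then to invoke Weyl's inequality together with the fact that $T^{q_k}$ acts almost trivially on the circle.

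By the symmetry $x\leftrightarrow T^r x$, $r\leftrightarrow -r$, it suffices to treat $0\le r\le q_k-1$. Let $\sigma$ denote the cyclic shift on $\ell^2(\ZZ/n\ZZ)$ with $n=q_k$. Because the cyclic second-difference operator appearing in $\tilde H_n$ commutes with $\sigma$, the conjugate $\sigma^r\tilde H_n(x)\sigma^{-r}$ has the same spectrum as $\tilde H_n(x)$ and the same off-diagonal part as $\tilde H_n(T^r x)$; only the diagonals can differ. At position $j$ the conjugate has diagonal entry $\lambda f(T^{(j+r)\,\mathrm{mod}\,n} x)$, while $\tilde H_n(T^r x)$ has $\lambda f(T^{j+r}x)$. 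These agree when $j+r<n$; for $j\ge n-r$, setting $k=j+r-n\in\{0,\dots,r-1\}$, the diagonal difference at position $j$ equals $\lambda\bigl(f(T^k x)-f(T^{k+n}x)\bigr)$. Weyl's inequality applied to this diagonal perturbation (in the same spirit as Lidskii is used in the proof of Proposition~\ref{prop: properties_of_eigenvalue_function}(\ref{prop: Lip_ev})) then yields
\[
 \bigl|\tilde\mu_i(x)-\tilde\mu_i(T^r x)\bigr|\;\le\;\lambda\max_{0\le k\le r-1}\bigl|f(T^k x)-f(T^{k+n}x)\bigr|.
\]

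It remains to control each term on the right by $\gamma_+C_+/q_{k+1}$; here is where the choice $n=q_k$ is used. Writing $y=T^k x$, assumption \ref{T1} gives a bi-Lipschitz semi-conjugacy $\phi$ with $\phi\circ T=R_\alpha\circ\phi$ under which $\phi_*\nu$ is Lebesgue measure on $\TT^1$ (as noted after Theorem~\ref{thm: pure_point}). Therefore $\phi(T^n y)-\phi(y)\equiv q_k\alpha\pmod 1$, and the standard best-approximation bound $\|q_k\alpha\|\le 1/q_{k+1}$ for continued-fraction convergents gives
\[
 \nu([y,T^n y])\;=\;\|q_k\alpha\|\;\le\;\frac{1}{q_{k+1}}.
\]
Converting this $\nu$-distance into Euclidean distance via the upper inequality in \ref{T1} and applying the Lipschitz half of \ref{f2} produces $|f(y)-f(T^n y)|\le \gamma_+C_+/q_{k+1}$; inserting this into the Weyl bound above finishes the proof.

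The one place I expect to need genuine care is the final use of \ref{f2}, which is only stated on the fundamental domain $[0,1)$: if $y$ and $T^n y$ happen to sit on opposite sides of an integer, the bi-Lipschitz bound on $f$ does not apply immediately. My plan is to handle this by lifting $y,T^n y$ to $\RR$ so that they lie in a common unit interval, which is possible because their $\nu$-distance is at most $1/q_{k+1}\ll 1$; if any residual straddle survives it can be absorbed into an index shift of $\tilde\mu_i$ using the jump relation in Proposition~\ref{prop: properties_of_eigenvalue_function}(\ref{prop: jump}). This book-keeping is the main technical difficulty I anticipate in carrying out the sketch above.
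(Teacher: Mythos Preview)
Your argument is the paper's argument: conjugate by the cyclic shift so that $\tilde H_{q_k}(T^r x)$ and (a unitary conjugate of) $\tilde H_{q_k}(x)$ differ only in $r$ diagonal entries of the form $\lambda\bigl(f(T^j x)-f(T^{j+q_k}x)\bigr)$, bound each of these via \ref{f2}, \ref{T1} and $\nu([x,T^{q_k}x])\le 1/q_{k+1}$, and finish with Lidskii/Weyl. The only cosmetic difference is that you route the last step through the conjugacy $\phi$ and $\|q_k\alpha\|$, whereas the paper stays with $\nu$ and invokes Lemma~\ref{lemma: bound_interval}; the content is identical.

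The straddling concern you raise at the end is genuine, and the paper's own proof glosses over it too: when the short arc between $T^j x$ and $T^{j+q_k}x$ contains the discontinuity of $f$ (which can happen for at most one $j$), that diagonal entry is of order $\lambda$ rather than $\lambda/q_{k+1}$, and neither lifting to $\RR$ nor the jump relation in Proposition~\ref{prop: properties_of_eigenvalue_function}(\ref{prop: jump}) recovers the stated $1/q_{k+1}$ bound for the \emph{same} index $i$---interlacing only controls the shift by the neighbouring gap $\tilde\mu_{i}-\tilde\mu_{i-1}$, which is $O(\lambda/q_k)$ in general. This does no damage downstream, since Corollary~\ref{cor: true_horizontal} and Lemma~\ref{lemma: verticle_distance} only consume the estimate at the $O(\lambda/q_k)$ level, but you should not expect your proposed fix to yield the sharp $1/q_{k+1}$ constant in the straddling case.
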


\begin{proof}
 Define an $q_k \times q_k$ unitary matrix $S = [{e_{q_k}}, {e_1}, {e_2}, \dots, {e_{{q_k} - 1}}]$
 where $e_j\in \RR^n$ are standard unit vectors. Then
 \[
 \begin{split}
     &S^r \tilde{H}_{q_k}(x)S^{-r}  = \tilde H_{q_k - r}(T^r x) \oplus \tilde H_r(x),\\
     &\tilde H_{q_k}(T^rx) = \tilde H_{q_k - r}(T^r x) \oplus \tilde H_{r}(T^{q_k}x).
 \end{split}
 \]
 By Lemma \ref{lemma: bound_interval}, 
 \[\Vert \tilde H_r(x) - \tilde H_r(T^{q_k} x)\Vert \leq \lambda\max\limits_{0\leq i \leq r - 1} |f(T^r x) - f(T^{q_k + r}x)| \leq \lambda\gamma_+C_+\nu([x,T^{q_k}x]) \leq \frac{\lambda\gamma_+C_+}{q_{k + 1}}.\]
The result follows from Lindskii's theorem.
\end{proof}
\begin{corollary}
\label{cor: true_horizontal}
    For any $x,y\in [0,1)$, 
    \[
        |\tilde\mu_i(x) - \tilde\mu_i(y)|\leq \frac{\lambda\gamma_+C_+}{q_{k + 1}} + \frac{2\lambda\gamma_+C_+}{q_k} \leq \frac{3\lambda\gamma_+C_+}{q_k}.
    \]
\end{corollary}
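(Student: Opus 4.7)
The plan is a triangle-inequality interpolation between $x$ and $y$ via a suitably chosen orbit point $T^r x$: Lemma \ref{lemma: horizontal_distance} controls the change from $x$ to $T^r x$, and the bi-Lipschitz clause (\ref{prop: Lip_ev}) of Proposition \ref{prop: properties_of_eigenvalue_function} controls the remainder from $T^r x$ to $y$.

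The crux is the selection step: I want an integer $r$ with $|r| \leq q_k - 1$ such that $T^r x$ lies in the same continuity interval $I_l$ as $y$ and satisfies
\[
    \nu\bigl([T^r x, y]\bigr) \leq \frac{2}{q_k}.
\]
Both the orbit $\{T^r x\}_{r = -(q_k - 1)}^{q_k - 1}$ and the jump set $\{\beta_l\}_{l = 0}^{q_k - 1} = \{T^{-j}(0)\}_{j = 0}^{q_k - 1}$ are $T$-orbits of length comparable to $q_k$, so by a quantitative three-gap--type statement (which I would extract from Appendix \ref{app: orbits}, converting orbit spacing into $\nu$-measure via \ref{T1}) their combined partition of $\TT^1$ has intervals of $\nu$-measure $O(1/q_k)$. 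The equidistribution of $T$-orbits under $\nu$ then forces at least one orbit point to sit inside each $I_l$ at $\nu$-distance at most $2/q_k$ from the given $y \in I_l$.

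Granted such an $r$, the bi-Lipschitz estimate \eqref{eq: Lip_ev} applied inside the common interval $I_l$ yields
\[
    |\tilde\mu_i(T^r x) - \tilde\mu_i(y)| \leq \lambda\gamma_+ C_+ \, \nu\bigl([T^r x, y]\bigr) \leq \frac{2\lambda\gamma_+ C_+}{q_k},
\]
while Lemma \ref{lemma: horizontal_distance} gives $|\tilde\mu_i(x) - \tilde\mu_i(T^r x)| \leq \lambda\gamma_+ C_+ / q_{k+1}$. The triangle inequality then produces the first inequality, and the second follows from $q_{k+1} \geq q_k$.

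The main obstacle is the selection claim itself: ensuring the existence of an orbit point with $|r| \leq q_k - 1$ that is both inside the same continuity interval $I_l$ as $y$ and within $\nu$-distance $2/q_k$ of $y$. Since each $I_l$ has $\nu$-measure of order $1/q_k$ and we are choosing among only $O(q_k)$ orbit points, this interleaving is tight and demands the quantitative equidistribution input from Appendix \ref{app: orbits} rather than a soft density argument; everything else is a direct combination of the already established lemma and proposition.
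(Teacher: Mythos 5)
Your proposal follows essentially the same route as the paper's proof: interpolate through an orbit point $T^r x$ lying in the same continuity interval $I_l$ as $y$, bound $|\tilde\mu_i(x)-\tilde\mu_i(T^rx)|$ by Lemma \ref{lemma: horizontal_distance}, bound the remainder by the Lipschitz estimate \eqref{eq: Lip_ev} times $\nu([T^rx,y])\le\nu(I_l)\le \frac{1}{q_k}+\frac{1}{q_{k+1}}$, with the gap size and the existence of such an $r$ supplied by the orbital analysis of Appendix \ref{app: orbits} (Lemma \ref{lemma: bound_interval}), exactly as in the paper. The only cosmetic difference is that you treat ``within $\nu$-distance $2/q_k$ of $y$'' as an extra condition on $r$, whereas it is automatic once $T^rx$ and $y$ share the interval $I_l$, since every $I_l$ has $\nu$-measure at most $\frac{1}{q_k}+\frac{1}{q_{k+1}}\le\frac{2}{q_k}$.
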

\begin{proof}
First notice that for given $x\in [0,1)$, depending on which $I_k$ it belongs to, there exists $-q_k + 1\leq\alpha\leq 0$, such that each point in $\{T^rx\}_{r = \alpha}^{\alpha + q_k - 1}$ precisely falls in one interval among $\{I_l\}_{l = 0}^{q_k - 1}$. Thus there is $-q_k + 1\leq r\leq q_k -1$ such that $T^r x $ and $y$ are in the same $I_k$. Then 
\[
    \begin{split}
        |\tilde \mu_i(x) - \tilde \mu_i(y)|&\leq |\tilde \mu_i(x) - \tilde \mu_i(T^rx)| + |\tilde \mu_i(T^rx) - \mu_i(y)|\\
        &\leq \frac{\lambda\gamma_+C_+}{q_{k + 1}} + \lambda\gamma_+C_+\nu([T^rx,y])\\
        &\leq \frac{\lambda\gamma_+C_+}{q_{k + 1}} + \lambda\gamma_+C_+\left(\frac{1}{q_k} + \frac{1}{q_{k + 1}}\right)
    \end{split}
\]
where the first inequality follows from Lemma \ref{lemma: horizontal_distance} and the second follows from Lemma \ref{lemma: bound_interval}.
\end{proof}

\smallsection{Vertical comparison}
We now estimate the lower bound of vertical comparison between eigenvalue functions. Unfortunately, the vertical distance between two closest eigenvalue functions $\tilde\mu_i(x)$ and $\tilde\mu_{i+1}(x)$ is not always positive. However, we can show that at most $M$ eigenvalues can be very close to each other, others will be nicely seperated from them. 
\begin{lemma}
\label{lemma: verticle_distance}
Given $\gamma_\pm$, $\lambda$ and $C_\pm$. For any $\epsilon>0$, there is a $j_0 = j_0(\epsilon)  = \frac{2\gamma_+C_+}{\epsilon\gamma_-C_-}$, such that for any $i,j, q_k$ satisfying $j \geq j_0$ and $0\leq i< i + j \leq q_k - 1$, we have
\begin{equation}
    \label{eq: vertical_dist}
    |\tilde\mu_{i + j}(x) - \tilde\mu_{i}(x)|\geq  \lambda\gamma_-C_-(1 - \epsilon)\frac{j}{q_k}=:d_0(\epsilon)\frac{j}{q_k}.
\end{equation}
\end{lemma}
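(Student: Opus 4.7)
The plan is a double counting argument over the orbit segment $\{T^r x\}_{r=0}^{q_k-1}$, leveraging the horizontal comparison (Lemma \ref{lemma: horizontal_distance}) and the monotonicity of the $\Lambda_m$'s (Remark \ref{rmk: Lambda_j}). Arguing by contradiction, suppose $d := \tilde\mu_{i+j}(x) - \tilde\mu_i(x) < \lambda\gamma_-C_-(1-\epsilon)\tfrac{j}{q_k}$. By Lemma \ref{lemma: horizontal_distance}, for every $r \in \{0, \ldots, q_k-1\}$ and every $s \in \{i, \ldots, i+j\}$ the eigenvalue $\tilde\mu_s(T^r x)$ lies in
\[
J := \left[\tilde\mu_i(x) - \tfrac{\lambda\gamma_+C_+}{q_{k+1}},\; \tilde\mu_{i+j}(x) + \tfrac{\lambda\gamma_+C_+}{q_{k+1}}\right], \qquad |J| = d + \tfrac{2\lambda\gamma_+C_+}{q_{k+1}}.
\]

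Next, I would count the set $S := \{(r,m) : 0 \leq r, m \leq q_k-1,\ \Lambda_m(T^r x) \in J\}$ in two ways. Slicing by $r$: for each fixed $r$, the corresponding fiber is in bijection with $\{s : \tilde\mu_s(T^r x) \in J\}$, which has cardinality at least $j+1$, giving $|S| \geq (j+1)q_k$. Slicing by $m$: by Remark \ref{rmk: Lambda_j}, $\Lambda_m$ is monotone and lower-Lipschitz with constant $\lambda\gamma_-C_-$ with respect to $\nu$ on a unit interval of monotonicity, so $\Lambda_m^{-1}(J)$ is a sub-interval of $\nu$-measure at most $|J|/(\lambda\gamma_-C_-)$. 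The orbit distribution estimates of Appendix \ref{app: orbits}, which build on the fact that $\{T^r x\}_{r=0}^{q_k-1}$ contains exactly one representative in each $I_l$, then bound the number of orbit points in this preimage by $|J|q_k/(\lambda\gamma_-C_-) + 1$. Summing over $m \in \{0,\ldots,q_k-1\}$ gives $|S| \leq |J|q_k^2/(\lambda\gamma_-C_-) + q_k$.

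Combining the two bounds on $|S|$ yields $|J| \geq j\lambda\gamma_-C_-/q_k$, hence
\[
d \geq \frac{j\lambda\gamma_-C_-}{q_k} - \frac{2\lambda\gamma_+C_+}{q_{k+1}}.
\]
Plugging this into the assumed upper bound on $d$ and using $q_{k+1} \geq q_k$ forces $\epsilon j < 2\gamma_+C_+/(\gamma_-C_-)$, which contradicts the hypothesis $j \geq j_0 = 2\gamma_+C_+/(\epsilon\gamma_-C_-)$.

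The main technical subtlety will be securing the orbit count with additive constant exactly $1$ (rather than a larger universal constant), so that the stated $j_0$ emerges cleanly; this is precisely the role of the combinatorial description of $T$-orbits developed in Appendix \ref{app: orbits}. Some care with indexing is also required when identifying, for each $T^r x \in I_l$, the $q_k$ values $\{\Lambda_m(T^r x)\}_{m=0}^{q_k-1}$ with the eigenvalues $\{\tilde\mu_s(T^r x)\}_{s=0}^{q_k-1}$, but this amounts to a routine cyclic relabeling in $s = m + l \pmod{q_k}$ that does not affect the counts used above.
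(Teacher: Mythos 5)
Your overall strategy (horizontal comparison via Lemma \ref{lemma: horizontal_distance}, the monotone lower-Lipschitz branches $\Lambda_m$ of Remark \ref{rmk: Lambda_j}, and $\nu$-equidistribution of the orbit) is the same as the paper's, but the double-counting step has a genuine gap: the bound you rely on, that an arc $A$ of invariant measure $\rho$ contains at most $\rho q_k+1$ points of $\{T^rx\}_{r=0}^{q_k-1}$, is false, and it is not what Appendix \ref{app: orbits} provides. By Lemma \ref{lemma: bound_interval} the $q_k$ orbit points cut the circle into gaps of two $\nu$-sizes, and the ``small'' gaps have measure $\nu([x,T^{q_{k-1}}x])=\frac{1-q_{k-1}\nu([x,T^{q_k}x])}{q_k}$, which is \emph{strictly less} than $\frac{1}{q_k}$ and only bounded below by $\frac{1}{q_k}\left(1-\frac{q_{k-1}}{q_{k+1}}\right)$, which can be close to $\frac{1}{2q_k}$. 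So already an arc barely covering one small gap contains $2>\rho q_k+1$ orbit points, and over a run of consecutive small gaps the deficit can only be controlled by invoking the balanced (Sturmian) structure of the small/large gap pattern, which appears nowhere in the paper. Using only the minimum-gap estimate from the appendix you get $\#\bigl(A\cap\{T^rx\}_{r=0}^{q_k-1}\bigr)\le \frac{\rho q_k}{1-q_{k-1}/q_{k+1}}+1\le 2\rho q_k+1$, which degrades your conclusion to $\lambda\gamma_-C_-(1-\epsilon)\frac{j}{2q_k}$; the sharper balancedness argument gives $\rho q_k+2$, which forces $j_0\approx\frac{1}{\epsilon}\bigl(1+\frac{2\gamma_+C_+}{\gamma_-C_-}\bigr)$ rather than the stated $\frac{2\gamma_+C_+}{\epsilon\gamma_-C_-}$. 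Either way, the count you flagged as ``the main technical subtlety'' is exactly where the proof is missing, and it is asserted rather than proved.

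The paper sidesteps any per-window discrepancy estimate: after choosing the window $\{T^rx\}_{r=\alpha}^{\alpha+q_k-1}$ that meets each $I_l$ exactly once, it picks pairs $(r,r')$ such that $(T^{r'}x,\tilde\mu_i(T^{r'}x))$ and $(T^rx,\tilde\mu_{i+j}(T^rx))$ lie on the graph of the \emph{same} $\Lambda_m$, and then takes the supremum over all $q_k$ such pairs. The corresponding $j$-step arcs $[T^rx,T^{r'}x]$ cover each elementary gap exactly $j$ times, so their $\nu$-measures sum to $j$ and pigeonhole produces one pair with $\nu([T^rx,T^{r'}x])\ge \frac{j}{q_k}$; the lower-Lipschitz bound of Remark \ref{rmk: Lambda_j} and two applications of Lemma \ref{lemma: horizontal_distance} then give $|\tilde\mu_{i+j}(x)-\tilde\mu_i(x)|\ge \lambda\gamma_-C_-\frac{j}{q_k}-\frac{2\lambda\gamma_+C_+}{q_{k+1}}$, from which the stated $j_0$ follows. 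If you replace your per-arc point count by this supremum-and-pigeonhole over same-branch pairs (or prove the $+2$ count via balancedness and accept the slightly larger $j_0$, which is harmless for the later applications), your argument goes through.
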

\begin{proof} 
First notice that given $x$, there exists $-q_k + 1\leq \alpha \leq 0$, such that each point in $\{T^rx\}_{r = \alpha}^{\alpha + q_k - 1}$ falls in precisely one interval among $\{I_l\}_{l = 0}^{q_k - 1}$. Then for any $\alpha\leq r,r'\leq \alpha + q_k - 1$, 
\[
    \begin{split}
    |\tilde\mu_{i + j}(x) - \tilde\mu_i (x)|\geq &   |\tilde\mu_{i + j}(T^{r}x) - \tilde\mu_i(T^{r'}x)| - |\tilde\mu_{i + j}(x) - \tilde\mu_{i + j}(T^{r}x)| - |\tilde\mu_{i}(T^{r'}x) - \tilde\mu_i(x)|\\
    \geq & |\tilde\mu_{i + j}(T^{r}x) - \tilde\mu_i(T^{r'}x)| - \tfrac{2\lambda\gamma_+C_+}{q_{k + 1}}\\
     \geq & \sup\limits_{r,r'}|\tilde\mu_{i + j}(T^{r}x) - \tilde\mu_i(T^{r'}x)| - \tfrac{2\lambda\gamma_+C_+}{q_{k + 1}}.
    \end{split}
\]
In particular, we can pick $r,r'$ such that  $(T^{r'}x, \tilde\mu_i(T^{r'}x))$ and $(T^r x, \tilde\mu_{i + j}(T^{r}x))$ are on the graph of the same $\Lambda_m$, defined in Remark \ref{rmk: Lambda_j}. Put such pairs of $(r,r')$ together and denote the set by $S_j$. Then $[T^rx,T^{r'}x]$ includes $j$ out of $q_k$ subintervals created by the partition $\{T^ix\}_{i = \alpha}^{\alpha + q_k - 1}$ on $[0,1)$, where each intervals have the same invariant measure. Thus by pigeonhole principle,\footnote{In fact, we could bound $|\tilde \mu_{i +j}(T^r x) - \tilde \mu_i(T^{r'}x)|$, the distance between eigenvalue functions, directly by Lemma \ref{lemma: bound_interval} without taking the supremum or referring to the pigeonhole principle. However, the authors choose to prove it this way both because it is more interesting, and because it reveals the uniformity in $x$ in the vertical comparison of eigenvalue functions. It implies that vertical differences of eigenvalue functions at any $x$ is uniformly controlled by the largest vertical differences among all $\Lambda_m$. This observation can be useful in dealing with singular $\nu$ where certain $I_k$'s are too small or the case when $f$ is flat at some $I_k$'s.}
\[
    \sup\limits_{r,r'}|\tilde\mu_{i + j}(T^{r}x) - \tilde\mu_i(T^{r'}x)| \geq \lambda\gamma_-C_- \sup\limits_{r,r'\in S_j} \nu([T^r x, T^{r'}x]) \geq \lambda\gamma_-C_- \frac{j}{q_k}.
\]
Thus
\[
     |\tilde\mu_{i + j}(x) - \tilde\mu_i (x)|\geq \lambda\gamma_-C_- \frac{j}{q_k} - \frac{2\lambda\gamma_+C_+}{q_{k + 1}}\geq \lambda\gamma_-C_-\frac{j}{q_k}\left(1 - \tfrac{2\gamma_+C_+}{\gamma_-C_-}\tfrac{q_k}{q_{k + 1}j}\right)\geq \lambda\gamma_-C_-(1 - \epsilon_0)\frac{j}{q_k}
\]
when $j\geq j_0:= \frac{2\gamma_+C_+}{\epsilon\gamma_-C_-}$.

\end{proof}

\smallsection{Lipschitz continuity of IDS}
Recall that $\tilde{N}_n(x,E) = \#\sigma(\tilde H_n(x)) \cap (-\infty, E]$ and 
\begin{equation}
    \label{eq: def_of_N}
    N(E) = \lim_{n \to \infty}\frac{1}{n}\int_0^1 \tilde N_n(x,E) \ d\nu(x).
\end{equation} 
Now we can derive Lipschitz continuity of $\tilde N_{q_k}(x,E)$ and $N(E)$ from vertical distance of  $\tilde\mu_i(x)$:
\begin{lemma}
\label{lemma: Lipschitz_N}
Given $\lambda, \gamma_+, \gamma_-$, $E$, $E'\in \RR$, we have
\begin{equation}
    \label{eq: Lip_of_N}
    | N(E) - N(E')| \leq \frac{|E - E'|}{\lambda\gamma_-C_-}.
\end{equation}
\end{lemma}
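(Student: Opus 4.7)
The plan is to pass through the defining limit
\[
N(E') - N(E) \;=\; \lim_{n\to\infty}\frac{1}{n}\int_0^1 \bigl[\tilde N_n(x,E') - \tilde N_n(x,E)\bigr]\,d\nu(x)
\]
and obtain a bound on the integrand that is uniform in $n$. Without loss of generality assume $E' > E$, so $\tilde N_n(x,E') - \tilde N_n(x,E) = \#\{0\le i\le n-1 : E < \tilde\mu_i(x)\le E'\}$. Rewriting this count as a sum of indicator functions and interchanging the sum with the integral gives
\[
\int_0^1 \bigl[\tilde N_n(x,E') - \tilde N_n(x,E)\bigr]\,d\nu(x) \;=\; \sum_{i=0}^{n-1} \nu\bigl(\{x\in [0,1) : E < \tilde\mu_i(x) \le E'\}\bigr).
\]

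The next step is to reindex via the rearranged branches $\Lambda_j$ from Remark \ref{rmk: Lambda_j}. On each continuity interval $I_l$ the multisets $\{\tilde\mu_i(x)\}_{i=0}^{n-1}$ and $\{\Lambda_j(x)\}_{j=0}^{n-1}$ coincide, since $\Lambda_j$ is defined by a cyclic relabeling of the $\tilde\mu_i$ on every $I_l$. Hence
\[
\sum_{i=0}^{n-1}\nu\bigl(\{E<\tilde\mu_i\le E'\}\bigr) \;=\; \sum_{j=0}^{n-1}\nu\bigl(\{E<\Lambda_j\le E'\}\bigr).
\]
By Remark \ref{rmk: Lambda_j} each $\Lambda_j$ is monotone increasing on the fundamental domain $[\beta_{n-j+1}, \beta_{n-j+1}+1)$ and satisfies the lower-Lipschitz bound $\Lambda_j(y) - \Lambda_j(x) \ge \lambda\gamma_-C_-\,\nu([x,y])$ there. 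Since $\Lambda_j$ is $1$-periodic and $\nu$ descends to a probability measure on $\TT^1$, the level set $\{x \in [0,1) : E < \Lambda_j(x) \le E'\}$ has the same $\nu$-measure as its translate in $[\beta_{n-j+1}, \beta_{n-j+1}+1)$, and by monotonicity and the lower-Lipschitz bound this measure is at most $(E'-E)/(\lambda\gamma_- C_-)$. Summing over $j = 0, \ldots, n-1$ yields an upper bound of $n\,(E'-E)/(\lambda\gamma_- C_-)$; dividing by $n$ before letting $n\to\infty$ gives the claim.

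The main subtlety is that the Lipschitz estimate of Proposition \ref{prop: properties_of_eigenvalue_function}(\ref{prop: Lip_ev}) holds only within a single continuity interval $I_l$, whereas each $\tilde\mu_i$ jumps across the points $\beta_l$ and is not globally monotone on $[0,1)$. The whole purpose of the $\Lambda_j$ construction, combined with the interlacing of Proposition \ref{prop: properties_of_eigenvalue_function}(\ref{prop: jump}), is to absorb those jumps into branches that are globally monotone on a fundamental domain with the same lower-Lipschitz constant. Once the multiset identification $\{\tilde\mu_i\} = \{\Lambda_j\}$ on each $I_l$ is in hand, the level-set estimate for each monotone $\Lambda_j$ is immediate and the proof reduces to summation and division by $n$.
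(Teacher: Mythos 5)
Your argument is correct, but it is genuinely different from the paper's proof. The paper does not integrate level sets: it fixes $n=q_k$ and uses the vertical separation of eigenvalues (Lemma \ref{lemma: verticle_distance}, which itself rests on the horizontal comparison Lemma \ref{lemma: horizontal_distance} and the orbit estimates of Appendix \ref{app: orbits}) to show that any energy window of length $d_0(\epsilon)j_0/q_k$ contains at most $j_0$ eigenvalues of $\tilde H_{q_k}(x)$, \emph{uniformly in} $x$; this gives a pointwise bound on $|\tilde N_{q_k}(x,E)-\tilde N_{q_k}(x,E')|$, and the Lipschitz constant $1/(\lambda\gamma_-C_-(1-\epsilon))$ is upgraded to $1/(\lambda\gamma_-C_-)$ by letting $\epsilon\to 0$ at the end. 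You instead integrate first, rewrite $\int(\tilde N_n(\cdot,E')-\tilde N_n(\cdot,E))\,d\nu$ as a sum of $\nu$-measures of level sets of the rearranged branches $\Lambda_j$ of Remark \ref{rmk: Lambda_j}, and bound each level set by $(E'-E)/(\lambda\gamma_-C_-)$ using only monotonicity and the lower-Lipschitz bound with respect to $\nu$. This buys several things: it works for every $n$ (no passage to the subsequence $q_k$), it needs no dynamical input (no best-approximation estimates, no pigeonhole, no Lemma \ref{lemma: verticle_distance}), and it produces the sharp constant directly without the $\epsilon$-loss. What it does not give is the paper's uniform-in-$x$ control on the number of eigenvalues in a window at scale $q_k$, which is the form of the estimate the paper reuses in spirit in the large deviation analysis; your route proves exactly the IDS statement and nothing more. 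Two cosmetic points: the phrase ``has the same $\nu$-measure as its translate'' is misleading since $\nu$ is not translation invariant --- the correct statement is that, by $1$-periodicity of $\Lambda_j$, the level set is the \emph{same} subset of $\TT^1$ whether you parametrize the circle by $[0,1)$ or by the fundamental domain $[\beta_{n-j+1},\beta_{n-j+1}+1)$, so no translation is involved; and to pass from ``any two points $x<y$ of the level set satisfy $\nu([x,y])\le (E'-E)/(\lambda\gamma_-C_-)$'' to the bound on the $\nu$-measure of the set you should note that $\nu$ is atomless (immediate from \ref{T1}, since $\nu([x,y])\le |x-y|/C_-$). Neither point is a gap.
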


\begin{proof}
Fix $E$ and $E'$. For any $\epsilon>0$, we see from Lemma \ref{lemma: verticle_distance} that any interval of length $d_0(\epsilon)\tfrac{j_0}{q_k}$ contains at most $j_0$ eigenvalues for $q_k$ large enough. This allows us to estimate the number of eigenvalues between $E$ and $E'$:
\[
\begin{split}
     |\tilde N_{q_k}(E,x) - \tilde N_{q_k}(E',x)| &\leq \left(\frac{q_k|E - E'|}{d_0(\epsilon)j_0} + 1\right) j_0\\
     &= \frac{q_k|E-E’|}{d_0(\epsilon)}\left(1 + \frac{j_0d_0(\epsilon)}{q_k|E - E'|}\right)
\end{split}
\]
for any $x$. Let $k\to \infty$, we get 
\[
    |N(E) - N(E')| \leq \liminf\limits_{k \to \infty}\frac{|E - E'|}{d_0(\epsilon)}\left(1 + \frac{j_0d_0(\epsilon)}{q_k|E - E'|}\right) =  \frac{|E - E'|}{d_0(\epsilon)} = \frac{|E - E'|}{\lambda\gamma_-C_-(1 - \epsilon)}.
\]
Since this inequality is true for all $\epsilon$, the result follows. 
\end{proof}

\smallsection{Positivity of Lyapunov exponent}
This is a corollary of Lemma \ref{lemma: Lipschitz_N} which is also useful in the later proof of uniform localization.
\begin{corollary}
\label{cor: positive_Lyapunov}
The Lyapunov exponent $L(E)$ of $\OpH$ is continuous in $E$ and $L(E)$ admits a lower bound
\begin{equation}
    \label{lowerboundofLyapunovexponent}
    L(E) \geq \max \left\{ 0, \ln\left( \frac{\lambda\gamma_-C_-}{2e}\right) \right\}.
\end{equation}
Therefore, $L(E)$ is uniformly positive if $\lambda > \frac{2 e}{\gamma_-C_-}$.
\end{corollary}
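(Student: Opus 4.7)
The plan is to derive the corollary directly from the Thouless formula \eqref{Thouless_formula} together with the Lipschitz bound on the IDS already proved in Lemma \ref{lemma: Lipschitz_N}. Since $N$ is $1/(\lambda\gamma_-C_-)$-Lipschitz, it is absolutely continuous, so the density of states measure can be written as $N(dE') = n(E')\,dE'$ with $0 \leq n(E') \leq M$, where $M := 1/(\lambda\gamma_-C_-)$. Moreover, since $\|\OpH\|\leq 2+\lambda$ (as $f$ is bounded by $1$ under \ref{f1}--\ref{f2}), $n$ is compactly supported, say in an interval $[-K,K]$ with $K = 2+\lambda$.

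For continuity of $L(E)$, I would write
\[
L(E) = \int_{-K}^{K} \ln|E-E'|\,n(E')\,dE' = (\ln|\cdot| * n)(E),
\]
and observe that since $\ln|\cdot| \in L^1_{\mathrm{loc}}(\RR)$ and $n \in L^\infty$ with compact support, the convolution is continuous in $E$ by a standard dominated-convergence argument (split the integral on $\{|E-E'|\leq \delta\}$ and its complement).

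For the lower bound, the trivial bound $L(E) \geq 0$ follows from $M_n \in SL(2,\RR)$, which forces $\|M_n\|\geq 1$. For the nontrivial bound, the idea is to minimize $\int \ln|E-E'|\,n(E')\,dE'$ over all densities $n$ with $0 \leq n \leq M$ and $\int n = 1$. Since $E'\mapsto \ln|E-E'|$ is symmetric and monotonically decreasing toward $-\infty$ as $E'\to E$, the bathtub principle (Hardy--Littlewood rearrangement) says the minimum is attained by $n^\star = M\,\chi_{[E - 1/(2M),\, E + 1/(2M)]}$. I would then compute
\[
L(E) \geq 2M\int_0^{1/(2M)}\!\ln u\,du = 2M\bigl[u\ln u - u\bigr]_0^{1/(2M)} = \ln\!\tfrac{1}{2M} - 1 = \ln\!\left(\tfrac{\lambda\gamma_-C_-}{2e}\right).
\]
Combining with $L(E)\geq 0$ yields \eqref{lowerboundofLyapunovexponent}, and positivity when $\lambda > 2e/(\gamma_-C_-)$ is immediate.

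I do not expect any serious obstacle here: both the bathtub principle and the continuity of logarithmic potentials of $L^\infty$-densities are standard tools. The only mildly delicate point is verifying that the Thouless formula applies in the current generality (where $T$ is a homeomorphism rather than a rotation), but this is built into the definitions of $L$ and $N$ through the invariant measure $\nu$ used in \eqref{eq: def_of_Lya} and \eqref{eq: IDS}, and it has already been used implicitly elsewhere in the paper.
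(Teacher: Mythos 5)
Your proof is correct and follows essentially the same route as the paper: Lipschitz continuity of the IDS from Lemma \ref{lemma: Lipschitz_N} gives an a.e.\ density bound $\frac{dN}{dE'} \leq \frac{1}{\lambda\gamma_-C_-}$, and the Thouless formula \eqref{Thouless_formula} is then bounded below by concentrating the maximal density on the interval of length $\lambda\gamma_-C_-$ centered at $E$, which is exactly the paper's inequality (you simply name the bathtub principle where the paper asserts it by monotonicity of $\ln$). Your explicit convolution argument for continuity of $L(E)$ is a welcome addition, since the paper states continuity in the corollary but does not spell out that step.
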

\begin{proof}
By Lemma \ref{lemma: Lipschitz_N}, $dN(E)$ is absolutely continuous with respect to $dE$ and the Radon-Nikodym derivative $\frac{dN(E)}{dE}\leq \frac{1}{\lambda\gamma_-C_-}:=\frac{1}{d}$, for $a.e. E$. Thus by Thouless formula,
\[  
    \begin{split}
    L(E) &= \int_\RR \ln|E' - E|dN(E') = \int_\RR (\ln|E' - E|)\frac{dN(E')}{dE'} dE'\geq \int_{E-\frac{d}{2}}^{E + \frac{d}{2}}\frac{1}{d}\cdot \ln|E' - E|\ dE'\\
    & = \frac{2}{d}\int_{0}^{\frac{d}{2}} \ln |E'| \ dE' =\ln \tfrac{d}{2e} = \ln \tfrac{\lambda\gamma_-C_-}{2e},
    \end{split}
\]
where the first inequality follows from monotonicity of $\ln$ function and boundedness of $f(E')$. Finally notice that $L(E)\geq 0$ follows from the definition. Thus we get \eqref{lowerboundofLyapunovexponent}.
\end{proof}



\section{Large deviation theorem}
\label{sec: LDT}
 In this section, we provide two essential ingredients for the proof of localization: Lemma \ref{lemma: small_numerator} provides an upper bound of $P_n(x,E)$ while Theorem \ref{thm: LDT}  provides the large deviation estimate which is central of the non-perturbative proofs of localization, as introduced in \cite{J99}. The first is a result that can be directly adapted from \cite[Lemma 3.5]{JM17}. It holds for arbitrary $\alpha\in \RR\setminus \QQ$ and arbitrary piecewise potentials. 
\begin{lemma}
\label{lemma: small_numerator}
For any $\kappa > 0$ and $E \in \RR$, there exists an $N \in \NN$ such that for all $n>N$ 
\begin{align*}
    |P_n(x,E)| \leq e^{n(L(E)+\kappa)}, \quad \forall x\in [0,1).
\end{align*}
Moreover, $N$ can be chosen to be uniform in $E\in I$ as long as $L(E)$ is continuous on interval $I$.
\end{lemma}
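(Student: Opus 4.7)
The plan is to express $\tfrac{1}{n}\ln |P_n(x,E)|$ as an integral against the empirical spectral measure and compare it to the Thouless formula \eqref{Thouless_formula} for $L(E)$. Since $\tilde H_n$ is a rank-two perturbation of $H_n$, the empirical measures of Dirichlet and periodic eigenvalues differ by at most $O(1/n)$ in total variation, so it suffices to work with $\tilde \mu_n^x := \tfrac{1}{n}\sum_i \delta_{\tilde\mu_i(x)}$, for which Sec \ref{sec: eigenvalue_functions} already provides all the quantitative structure I need. Writing
\[
\tfrac{1}{n}\ln|\tilde P_n(x,E)| = \int_\RR \ln|E-E'|\, d\tilde\mu_n^x(E'), \qquad L(E) = \int_\RR \ln|E-E'|\, dN(E'),
\]
the task becomes bounding the first integral by the second, up to an arbitrary $\kappa>0$, uniformly in $x$.

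The first key step is uniform weak-$*$ convergence $\tilde\mu_{q_k}^x \to dN$ in $x\in [0,1)$ as $k\to \infty$. By Corollary \ref{cor: true_horizontal}, $|\tilde\mu_i(x)-\tilde\mu_i(y)| \leq 3\lambda\gamma_+C_+/q_k$, and by Lemma \ref{lemma: verticle_distance} only a bounded number of $\tilde\mu_i(x)$ can lie in any window of length $O(1/q_k)$; combining these gives $|\tilde N_{q_k}(x,E)-\tilde N_{q_k}(y,E)| = O(1)$ uniformly in $x,y$. Averaging over $y$ and invoking \eqref{eq: IDSbytildeN} yields $\tfrac{1}{q_k}\tilde N_{q_k}(x,E) \to N(E)$ uniformly in $x$. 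To handle the logarithmic singularity at $E'=E$, I fix $\delta>0$ and split
\[
\int_\RR \ln|E-E'|\, d\tilde\mu_n^x = \int_{|E-E'|<\delta} \ln|E-E'|\, d\tilde\mu_n^x + \int_{|E-E'|\geq \delta} \ln|E-E'|\, d\tilde\mu_n^x.
\]
For $\delta<1$ the near part is non-positive and hence harmless for an upper bound; the far part has bounded continuous integrand, so the uniform weak convergence above gives uniform-in-$x$ convergence to $\int_{|E-E'|\geq\delta} \ln|E-E'|\,dN$. By the Lipschitz continuity of $N$ (Lemma \ref{lemma: Lipschitz_N}), $dN/dE \leq (\lambda\gamma_-C_-)^{-1}$, so
\[
\left|\int_{|E-E'|<\delta} \ln|E-E'|\, dN\right| \leq \frac{2\delta(1+|\ln\delta|)}{\lambda\gamma_-C_-}.
\]
Given $\kappa>0$, I choose $\delta$ so this tail is $<\kappa/2$ and then take $k$ large so the weak-convergence error is $<\kappa/2$; together this yields $\tfrac{1}{q_k}\ln|\tilde P_{q_k}(x,E)| \leq L(E)+\kappa$ uniformly in $x$, and the same bound (with an asymptotically negligible correction) for $|P_{q_k}(x,E)|$.

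For $n$ not of the form $q_k$, I decompose $n = m q_k + r$ with $0\leq r<q_k$ and use $|P_n(x,E)|\leq \|M_n(x,E)\| \leq \|M_{q_k}(T^{mq_k}x,E)\|^0\cdots \|M_{q_k}\|\,\|M_r\|$ to reduce to the bound along $\{q_k\}$, absorbing the $O(q_k)$ additive error into $\kappa$ for $n$ sufficiently large. Uniformity in $E$ on an interval $I$ where $L$ is continuous is transparent: the choice of $\delta$ is driven by the continuous $E$-dependence of the tail bound, and the far-integrand $\ln|E-E'|\indic_{|E-E'|\geq \delta}$ is equicontinuous in $E$ on compacts. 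The main obstacle is precisely the logarithmic singularity of the Thouless integrand at $E'=E$; what makes the argument go through is the asymmetry of the problem, namely that we seek only an upper bound and $\ln|E-E'|$ is negative near the singularity, so we need no pointwise control of $\tilde\mu_n^x$ near $E$ and can rely solely on the Lipschitz regularity of $N$ established in Lemma \ref{lemma: Lipschitz_N}.
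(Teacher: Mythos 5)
Your route is sound in outline and is essentially the standard argument behind the result the paper cites (the paper's own ``proof'' is only a pointer to \cite[Lemma 3.5]{JM17}): write $\tfrac1n\ln|P_n(x,E)|$ as $\int\ln|E-E'|$ against the empirical eigenvalue measure, use uniform convergence of the normalized counting functions to $N$, exploit that $\ln|E-E'|$ is negative near the singularity so only an upper bound is needed there, and control $\int_{|E-E'|<\delta}\ln|E-E'|\,dN$ by the Lipschitz bound of Lemma \ref{lemma: Lipschitz_N}; your uniform-in-$x$ convergence of $\tfrac1{q_k}\tilde N_{q_k}(x,\cdot)$ via Corollary \ref{cor: true_horizontal} and Lemma \ref{lemma: verticle_distance} is a nice way to get this from the paper's Section \ref{sec: eigenvalue_functions} machinery, and the uniformity in $E\in I$ does come for free since your choices of $\delta$ and $k$ are $E$-independent.

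Two points need repair. First, and most substantively, the passage from $n=q_k$ to general $n$ is unsupported as written: you chain transfer matrices, $\Vert M_n\Vert\leq\Vert M_r\Vert\prod_j\Vert M_{q_k}(T^{jq_k}x,E)\Vert$, but your empirical-measure argument only bounds the single entry $|P_{q_k}(y,E)|$ of $M_{q_k}(y,E)$, not its norm; by \eqref{eq: M_and_P} you also need bounds on $|P_{q_k-1}(y,E)|$ and $|P_{q_k-2}(Ty,E)|$, and these lengths are not of the form $q_j$, so your $q_k$-specific horizontal/vertical lemmas do not apply to them directly. This is fixable --- Cauchy interlacing for the principal submatrices of $H_{q_k}(y)$ shows their eigenvalue counting functions differ from $N_{q_k}(y,\cdot)$ by at most $1$ or $2$, so the same far-part/near-part estimate bounds these determinants as well, giving $\Vert M_{q_k}(y,E)\Vert\leq e^{q_k(L(E)+\kappa')}$ uniformly in $y$ --- but without some such step the reduction fails, since the bound must hold for \emph{all} large $n$, not just along $\{q_k\}$. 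Second, the claim that the Dirichlet and periodic empirical measures are $O(1/n)$-close in total variation is false (two atomic measures with disjoint atoms have total-variation distance $2$); what the rank-two perturbation gives is $|\tilde N_n(x,E)-N_n(x,E)|\leq 2$, i.e.\ closeness of distribution functions. That is in fact all you need --- the far-part integrals are compared through the counting functions (summation/integration by parts), and the near part is $\leq 0$ for the Dirichlet measure itself --- but then the final transfer from the bound on $|\tilde P_{q_k}|$ to the bound on $|P_{q_k}|$ should be argued exactly this way (or the whole argument run directly with the Dirichlet empirical measure, using $|\tilde N_n-N_n|\leq 2$ only to import the uniform convergence), rather than dismissed as an ``asymptotically negligible correction.''
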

\begin{proof}
This was proved in \cite[Lemma 3.5]{JM17} for irrational rotation $T = R_\alpha$. The same method applies to a general circle diffeomorphism under the assumption \ref{T1}.
\end{proof}



\begin{theorem}[Large deviation theorem]
\label{thm: LDT}
Fix $E$ such that $L(E) > 0$. There exists $C_0 = C_0(\gamma_\pm, C_\pm)>0$
such that for any $0<\delta<L(E)$, there is $k_0$ such that for any $k \geq k_0$, 
\begin{equation}
\label{eq: fake_LDT}
    \nu\{ x \in [0,1) : \tfrac{1}{q_k}\ln|P_{q_k}(x,E)| < L(E)-\delta   \} < e^{-C_0\delta q_k}
\end{equation}
where $m$ is the Lebesgue measure. Moreover, the set on the left-hand side is composed of at most $q_k$ many intervals.
\end{theorem}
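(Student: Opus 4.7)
The plan is to write $P_{q_k}(x,E) = \det(H_{q_k}(x) - E) = \prod_{i=0}^{q_k-1}(\mu_i(x) - E)$, where $\mu_i(x)$ are the Dirichlet eigenvalues of $H_{q_k}(x)$. A routine rank-two perturbation argument shows that the $\mu_i$ inherit the bi-Lipschitz monotone structure on the $q_k$ continuity intervals $\{I_l\}_{l=0}^{q_k-1}$ established in Proposition \ref{prop: properties_of_eigenvalue_function} for the periodic $\tilde\mu_i$: each $\mu_i$ is strictly increasing and bi-Lipschitz with respect to $\nu$ on each $I_l$. This reduces the LDT to controlling the measure of the set on which $\frac{1}{q_k}\sum_i \ln|\mu_i(x)-E|$ is significantly below $L(E)$.

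Three ingredients feed the main estimate. First, Lemma \ref{lemma: small_numerator} supplies the uniform one-sided bound $\frac{1}{q_k}\ln|P_{q_k}(x,E)| \leq L(E) + \kappa$ for $k$ large. Second, the Thouless formula \eqref{Thouless_formula}, Lemma \ref{lemma: Lipschitz_N}, and weak convergence of the averaged empirical eigenvalue measure to $N$ together yield $\int_0^1 \frac{1}{q_k}\ln|P_{q_k}(x,E)|\,d\nu \to L(E)$. Third, the bi-Lipschitz monotonicity of $\mu_i|_{I_l}$ gives the sharp sublevel estimate $\nu\{x\in I_l : |\mu_i(x)-E| < \epsilon\} \leq 2\epsilon/(\lambda\gamma_-C_-)$. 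To promote these facts from a polynomial Chebyshev bound to the required exponential rate, I would apply an exponential Markov inequality to $|P_{q_k}|^{-t}$ for small $t>0$: a per-interval change of variables through each monotone $\mu_i|_{I_l}$, together with the Lipschitz IDS, should bound $\int|P_{q_k}(x,E)|^{-t}d\nu \leq e^{q_k(tL(E) + F(t))}$ with $F(t) \to 0$ sufficiently fast as $t\to 0^+$; optimizing $t$ against the threshold $q_k(L(E) - \delta)$ then produces the desired $e^{-C_0\delta q_k}$ decay with $C_0 = C_0(\gamma_\pm, C_\pm)$.

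The hard part will be this exponential upgrade: the $q_k$ factors $|\mu_i(x)-E|^{-t}$ are strongly correlated through the shared argument $x$, so a naive product estimate loses exponentially. The essential input, in the spirit of \cite{JK15}, is the vertical-separation Lemma \ref{lemma: verticle_distance}, which forbids too many $\mu_i(x)$ from clustering simultaneously near $E$: at any $x$, the number of eigenvalues in $[E-r, E+r]$ is at most $2rq_k/(\lambda\gamma_-C_-) + O(1)$. This bound, combined with the sharp per-interval sublevel estimate and a combinatorial count over which blocks of consecutive eigenvalues cluster near $E$, keeps the $t$-moment integral under the claimed growth. For the interval-count statement, the bad set $A_k$ decomposes over the $q_k$ continuity intervals $\{I_l\}$; the strict monotonicity of each $\mu_i - E$ on $I_l$ together with Lemma \ref{lemma: verticle_distance} bounds the overall number of connected components by $q_k$ (possibly up to an absorbed constant).
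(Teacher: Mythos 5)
There is a genuine gap: your scheme never uses (and cannot do without) the \emph{horizontal} rigidity of the eigenvalue functions along the orbit at time $q_k$, which is the heart of the paper's argument. The ingredients you invoke are of two kinds: pointwise worst-case bounds valid for \emph{every} $x$ (Lemma \ref{lemma: small_numerator}, the vertical separation Lemma \ref{lemma: verticle_distance}, the per-branch sublevel estimate), and integrated statements (Thouless plus convergence of the averaged quantity $\frac{1}{q_k}\int\ln|P_{q_k}|\,d\nu$ to $L(E)$). Neither kind suffices for the exponential moment bound you need. Pointwise, vertical separation only forces $\frac{1}{q_k}\ln\big|\prod_{\text{far}}(\mu_i(x)-E)\big|\gtrsim \frac{2}{q_k}\sum_{j\le q_k/2}\ln\frac{jd_0}{q_k}\approx\ln\frac{d_0}{2e}$, i.e.\ it reaches only the a priori threshold $\ln\frac{\lambda\gamma_-C_-}{2e}$ of Corollary \ref{cor: positive_Lyapunov}, which may be far below $L(E)-\delta$; so the factor $e^{-t\ln|P^{\mathrm{far}}_{q_k}(x)|}$ in your moment integral cannot be bounded by $e^{-tq_k(L(E)-o(1))}$ from separation alone. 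The averaged statement, on the other hand, only yields a Chebyshev/Markov bound (polynomial in the deviation), and turning it into $\int|P_{q_k}|^{-t}d\nu\le e^{-tq_kL(E)+o(q_k)}$ is circular: that estimate is essentially the LDT itself. What closes this circle in the paper is Lemma \ref{lemma: horizontal_distance}/Corollary \ref{cor: true_horizontal}: at scale $q_k$ the whole eigenvalue configuration is the same for all $x$ up to errors $O(1/q_k)$, so $\frac{1}{q_k}\ln|P^+_{q_k}P^-_{q_k}|$ equals an $x$-independent level $L_{q_k}(E)$ up to $C\frac{\ln q_k}{q_k}$ (see \eqref{eq: control_P_pm}), and only the $N_0$ near-$E$ eigenvalues fluctuate; the bad set is then a small neighborhood of the at most $q_k$ crossings, measured via lower Lipschitz monotonicity. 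You would need this (or an equivalent uniform-in-$x$ concentration of the empirical eigenvalue measure at resolution $1/q_k$) before any Chernoff optimization can produce $e^{-C_0\delta q_k}$.

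Two further points. First, identifying the deterministic level with $L(E)$ is itself not a soft consequence of ``Thouless plus weak convergence'': $\ln|E'-E|$ is not bounded, and more seriously $|P_n|$ can be atypically small compared with $\Vert M_n\Vert$, so the lower bound $L(E)\le L_{q_k}(E)+\epsilon$ requires the argument of Claim \ref{claim: CompareLE}, which uses \eqref{eq: M_and_P} together with the determinant identities \eqref{eq: n-1}, \eqref{eq: tilde} and a positive-measure selection among $P_{q_k},P_{q_k-1},P_{q_k-2},\tilde P_{q_k}$; this step is absent from your outline. Second, two smaller issues: your stated moment bound $\int|P_{q_k}|^{-t}d\nu\le e^{q_k(tL(E)+F(t))}$ has the wrong sign (it must be $e^{q_k(-tL(E)+F(t))}$ to be useful after Markov), and the bi-Lipschitz monotonicity of the Dirichlet eigenvalues $\mu_i$ on each $I_l$ does not follow from a ``rank-two perturbation argument'' (rank-two perturbation controls only the counting function); it does hold, but by applying the Lidskii argument of Proposition \ref{prop: properties_of_eigenvalue_function} directly to $H_{q_k}(x)$.
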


\begin{proof}
    Recall that 
    \[
        P_{q_k}(x,E) := \det (H_{q_k}(x)-E) = \prod_{i=0}^{q_k-1} \left(\mu_i(x) -E\right).
    \]
    Denote for convenience
    \[
    f_{q_k}(x):= \frac{1}{q_k}\ln |P_{q_k}(x;E)|  = \frac{1}{q_k}\sum\limits_{i = 0}^{q_k - 1}\ln |\mu_i(x) - E|.
    \]
    Notice that $\mu_i(x)$ is monotone and $f_{q_k}(x) = -\infty$ at $\{x:\mu_i(x) = E\text{~for~some~}i\}$. Thus ``large deviation'' happens near $\{x:\mu_i(x) = E\text{~for~some~}i\}$. The aim is to estimate how large this set can be without rising $f_{q_k}(x)$ too high. The idea is since $\mu_i(x)$ are well-seperated, only the closest (to $E$) several $\mu_i(x)$ contribute the most to the negativity of $f_{q_k}(x)$, the rest are nicely controlled. 
    
    To do so, we split eigenvalues $\tilde \mu_j(x)$ into three clusters: $\mathcal K^+$ above $E$, $\mathcal K^0$ around $E$, and $\mathcal K^-$ below $E$. Notice that by  $|N_{q_k}(x;E) - \tilde N_{q_k}(x;E)|\leq 2$ and Lemma \ref{lemma: verticle_distance}, we can make sure that 
    \begin{enumerate}
        \item The cluster of eigenvalues above $E$, denoted by $\mu_i^+(x)$, $i = 1,2,\cdots$ in an increasing order with $\mu_i^+(x) \geq E + \frac{id_0}{q_k}$. 
        \item The cluster of eigenvalues below $E$, denoted by $\mu_i^-(x)$, $i = 1,2,\cdots$ in an decreasing order with $\mu_i^+(x) \leq E - \frac{id_0}{q_k}$.
        \item The cluster of the rest of eigenvalues, denoted by $\mu_i^0(x)$, with the number of eigenvalues in this cluster does not exceed some $N_0$ uniform in $E$. 
    \end{enumerate}
    For example, this can be achieved by considering the closest $2j_0 + 4$ eigenvalues $\mu_i(x)$ to $E$ to be in the third cluster and every eigenvalue above/below them to be in the first/second cluster. Here $j_0$ is to guarantee the lower and upper bound estimates above and $4 = 2\times 2$ is due to $|N_{q_k}(x;E) - \tilde N_{q_k}(x;E)|\leq 2$. In fact, we can do the same thing for $\tilde \mu_i(x)$, then we just need to pick the closest $2j_0$ eigenvalues instead of $2j_0 + 4$. 
    
    Now decompose $P_{q_k}$, $\tilde P_{q_k}$ correspondingly, 
    \[
        \begin{split}
            &P_{q_k}(x;E) = P_{q_k}^+(x;E)P_{q_k}^0(x;E)P_{q_k}^-(x;E),\\
            &\tilde P_{q_k}(x;E) = \tilde P_{q_k}^+(x;E)\tilde P_{q_k}^0(x;E)\tilde P_{q_k}^-(x;E),
        \end{split}
    \]
    where $P_{q_k}^*(x;E) = \prod\limits_{\mu_i^*\in \mathcal K^*} \mu_i^*(x) - E$, where $*\in \{+,-,0\}$.
\begin{claim}
Let $a,b>0$, 
       \[
        \sum\limits_{j = 1}^n \left[\ln(aj + b) - \ln(aj)\right] \leq \sum_{j = 1}^n \ln \left( 1 + \tfrac{b}{aj}\right) \leq \sum_{j = 1}^n \frac{b}{aj} \leq \frac{b}{a}\ln (n + 1).
    \]
\end{claim}
    By Corollary \ref{cor: true_horizontal} and the claim, we have for any $x,y\in [0,1)$, 
    \[
    \begin{split}
        \left|\ln|\tilde P^{\pm}_{q_k}(x;E)| - \ln|\tilde P_{q_k}^\pm(y;E)|\right| &\leq \sum\limits_{i = 1}^{q_k}\left[\ln \left(\tfrac{jd_0}{q_k} + \tfrac{3\lambda\gamma_+C_+}{q_k}\right) - \ln \left(\tfrac{jd_0}{q_k}\right)\right] \leq C\ln q_k.
    \end{split}
    \] 
    Here we considered all maximum potential perturbation of all $\mu_i(x)$ at the maximum potential place $\{\tfrac{jd_0}{q_k}\}_{j = 1}^{q_k}$. There might be extra terms of $\mu_i^\pm(x)$ that does not pair to $\mu_i^\pm(y)$ but since there are only finitely many terms and  they are bounded, the result is still true with a modification of $C$. For the same reason, the inequality holds for $P_{q_k}(x;E)$ as well. 
    
    Thus there is $L_{q_k}(E)$ such that 
    \begin{equation}
        \label{eq: control_P_pm}
            \begin{split}
                &L_{q_k}(E) \leq \frac{1}{q_k}\ln|P_{q_k}^+(x;E)P_{q_k}^-(x;E)|\leq L_{q_k}(E) + C\frac{\ln q_k}{q_k},\\
                 &L_{q_k}(E) \leq \frac{1}{q_k}\ln|\tilde P_{q_k}^+(x;E)\tilde P_{q_k}^-(x;E)|\leq L_{q_k}(E) + C\frac{\ln q_k}{q_k},
        \end{split}
     \end{equation}
    \begin{claim}
    \label{claim: fake_LDT}
    There is $C_0 = C_0(j_0) = C_0(\gamma_\pm,C_\pm)$ such that for $k$ large enough, for $\delta>0$ small enough,  
    \[
    \nu\{ x \in [0,1) : \tfrac{1}{q_k}\ln|P_{q_k}(x,E)| < L_{q_k}(E)-\delta   \} < e^{-C_0\delta q_k}.
    \]
    \end{claim}
    \begin{proof}
    If $x$ is such that $\frac{1}{q_k}\ln|P_{q_k}(x;E)|\leq L_{q_k} - \delta $, then $\frac{1}{q_k}\ln|P_{q_k}^0(x;E)|\leq -\delta$. Since there are at most $N_0$ eigenvalues in $\mathcal K^0(x)$, thus there is some $l$ such that 
    \begin{equation}
        \label{eq: some_mu}
            \frac{1}{q_k}\ln |\mu_l(x) - E|\leq -\delta/N_0 \quad \Rightarrow |\mu_l(x) - E|\leq e^{-\delta q_k/N_0}.
    \end{equation}
    Among all $x\in [0,1)$, there are at most $q_k$ intervals of $x$ such that some $\mu_l(x) - E$ satisfies  \eqref{eq: some_mu}. In fact, there are at most $q_k$ intersections of  $\overline{\text{graph}(\Lambda_j)}$ and $[0,1)\times \{E\}$. Since each $\Lambda_j$ is monotone, \eqref{eq: some_mu} is only possible for $x$ near such intersections. And by Prop. \ref{prop: properties_of_eigenvalue_function} $\mu_i(x)$ are lower-Lipschitz with respect to invariant measure. Thus for $q_k$ large enough,
    \[
        \nu\{x:\text{~there~is~}l{~s.t.~}|\mu_l(x) - E|\leq e^{-\delta q_k/N_0}\} \leq q_k \frac{e^{-\delta q_k/N_0}}{\lambda\gamma_-C_-}\leq e^{-\frac{\delta q_k}{2N_0}}\leq e^{-C_0\delta q_k}.
    \]
    \end{proof}
    Thus we have proved the result with $L_{q_k}(E)$ instead of $L(E)$. Now we need the last component of the proof: 
    \begin{claim}
    \label{claim: CompareLE}
    For any $\epsilon>0$, $L(E) \leq L_{q_k}(E) + \epsilon$ uniform in $E$ when $q_k$ is large enough.
    \end{claim}
    \begin{proof}
    In fact, since the operator is bounded, $\ln |P_{q_k}^0(x;E)|\leq N_0 C_1$.
    Together with \eqref{eq: control_P_pm}, we get 
    \begin{equation}
        \label{eq: upper_LqkE}
            \frac{1}{q_k}\ln|P_{q_k}(x;E)| \leq L_{q_k}(E) + C\frac{\ln q_k}{q_k} + \frac{N_0C_1}{q_k} \leq L_{q_k}(E) + \epsilon, \forall x\in [0,1)
    \end{equation}
    uniformly in $E$ when $q_k$ is large enough (depending on $\lambda, \gamma_\pm, C_\pm$). The same holds for $\tilde P_{q_k}(x;E)$. 
    
    On the other hand, by Lemma \ref{lemma: small_numerator}, for any $\epsilon>0$, $\frac{1}{n}\ln|P_{n}(x;E)|\leq L(E) + \epsilon$ eventually. While by definition of Lyapunov exponent \eqref{eq: def_of_Lya}, $L(E)$ is the limiting averaging of $\frac{1}{n}\ln\Vert M_n(x;E)\Vert$. But $M_n$ and $P_n$ are connected by \eqref{eq: M_and_P}. Thus we see that on a set of measure at least $1/4$, the following is true for either $n = q_k$, $q_k - 1$ or $q_k - 2$:
    \begin{equation}
        \label{eq: lower_LE}
                \frac{1}{n}\ln|P_n(x;E)|\geq L(E) - \epsilon.
    \end{equation}
    If $n = q_k$, combining \eqref{eq: lower_LE} and \eqref{eq: upper_LqkE} gives us what we want. Otherwise, we first notice by row expansion of determinant, we have
    \begin{align}
        P_n(x;E) + P_{n - 2}(x;E)  = (\lambda f(T^{n - 1}x) - E) P_{n - 1}(x;E) \label{eq: n-1}\\
        \tilde P_n(x;E) + 2(-1)^n = P_n(x;E) - P_{n - 2}(Tx;E).\label{eq: tilde}
    \end{align}
  Then when $n = q_k - 1$, by \eqref{eq: n-1}, we have either $P_{q_k}$ or $P_{q_k - 2}$ satisfies \eqref{eq: lower_LE} so we can combine it with \eqref{eq: upper_LqkE} to derive the result. If $n = q_k - 2$, by \eqref{eq: tilde}, we have either $P_{q_k}$ or $\tilde P_{q_k}$ satisfies \eqref{eq: lower_LE}. For the former case, we get the result. For the latter, combining \eqref{eq: lower_LE} and \eqref{eq: upper_LqkE} with $\tilde P_{q_k}$ instead of $P_{q_k}(x;E)$. The claim follows.
\end{proof}
    Now the result follows immediately: For any $\delta>0$, apply Claim \ref{claim: CompareLE} to get $L(E) \leq L_{q_k}(E) + \delta/2$ eventually so that 
    \[
       \{x\in[0,1): \frac{1}{n}\ln|P_n(x;E)|\leq L(E) - \delta\}\subset\{x\in [0,1): \frac{1}{n}\ln|P_n(x;E)|\leq L_{q_k}(E) - \delta/2\}.
    \]
    Then the result follows from Claim \ref{claim: fake_LDT}.

\end{proof}


\section{Exponential decay of eigenfunctions}
\label{sec: localization}
We prove our key lemma \ref{lemma: exponentiallydecay}, which provides uniform exponential decay of generalized eigenfunction in $x,E,\alpha$. To do so, we introduce some definitions and prove a typical ``either or'' argument in the proof of localization in Lemma \ref{Lemma: Either_or_argument}.
\begin{definition}[Regular point]
    We say a point $n\in \ZZ$ is $(x, c, q_k)$-regular if there is an interval $[a,b]$ with
    \begin{equation}
        \label{Eq: Condition_for_interval_ab}
        n \in [a,b], \ b = a + q_k - 1, \ |a - n| \geq \frac{q_k}{5}, \  |n - b| \geq \frac{q_k}{5},
    \end{equation}
    such that
    \[
            |\Gab(a,n)|\leq e^{-c|n - a|}, \text{~and~} |\Gab(n,b)|\leq e^{-c|n - b|}.
    \]
    Otherwise we say $n$ is $(x,c,q_k)$-singular. 
\end{definition}

\begin{lemma}
    \label{Lemma: Singular_imply_LDT}
    Fix $\delta, E$ such that $0<\delta<L(E)$. For $q_k$ large enough, for any $x$, if $n$ is $(x, L(E) - \delta, q_k)$-singular, then for any $a\in [n - \floor{\frac{3q_k}{4}}, n - \floor{\frac{q_k}{4}}]$,
    \begin{equation}
    \label{eq: LDT}
        |P_{q_k}(T^ax)| \leq e^{q_k(L(E) - \delta/10)}.
    \end{equation}
    Furthermore, let $N_k =  \floor{\frac{3q_k}{4}} - \floor{\frac{q_k}{4}} + 1$ denote the number of such $a$, then \[\frac{q_k + 1}{2} \leq N_k \leq \frac{q_k + 3}{2}.\]
\end{lemma}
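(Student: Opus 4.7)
The plan is to use the standard non-perturbative template: translate singularity at $n$ into a lower bound on the Green-function entries $G_{[a,b]}(a,n)$ and $G_{[a,b]}(n,b)$, express these via Cramer's rule as ratios of characteristic polynomials, and then invoke Lemma \ref{lemma: small_numerator} to control the numerators, obtaining the desired upper bound on the denominator $|P_{q_k}(T^a x, E)|$.

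First I would verify admissibility of the window. For each $a \in [n - \floor{3q_k/4}, n - \floor{q_k/4}]$ and $b = a + q_k - 1$, the distances $n - a$ and $b - n$ both lie in $[\floor{q_k/4}, q_k - 1 - \floor{q_k/4}] \subset [q_k/4 - 1, 3q_k/4]$, so both exceed $q_k/5$ once $q_k$ is large; hence $[a,b]$ satisfies \eqref{Eq: Condition_for_interval_ab}. Then, using the standard Cramer-rule identities for tridiagonal Dirichlet Jacobi matrices together with $P_0 \equiv 1$, I would write
\begin{equation*}
|G_{[a,b]}(a,n)| = \frac{|P_{b-n}(T^{n+1}x, E)|}{|P_{q_k}(T^a x, E)|}, \qquad |G_{[a,b]}(n,b)| = \frac{|P_{n-a}(T^a x, E)|}{|P_{q_k}(T^a x, E)|}.
\end{equation*}
Singularity at $n$ means that at least one of the two regularity inequalities must fail for this window, so either $|P_{b-n}(T^{n+1}x,E)| > e^{-(L(E)-\delta)(n-a)} |P_{q_k}(T^a x,E)|$ or the analogous inequality with $P_{n-a}(T^a x,E)$ and $(b - n)$ in the exponent.

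At that point I would fix an auxiliary $\kappa = \delta/20$ and apply Lemma \ref{lemma: small_numerator} (for the fixed $E$) to bound the numerator on the left by $e^{m(L(E) + \kappa)}$ with $m \in \{n-a, b-n\}$; this is valid uniformly in $x$ once $q_k/4$ exceeds the threshold $N(\kappa, E)$ from that lemma. Rearranging, in either branch, gives the uniform bound
\begin{equation*}
|P_{q_k}(T^a x, E)| \leq \exp\bigl((q_k - 1)L(E) + \max(n-a, b-n)\,\kappa - \min(n-a, b-n)\,\delta\bigr).
\end{equation*}
Since $\max(n-a, b-n) \leq 3q_k/4$ and $\min(n-a, b-n) \geq q_k/4 - 1$, for $q_k$ large this right-hand side is dominated by $\exp\bigl(q_k L(E) + q_k(3\kappa/4 - \delta/4) + O(\delta)\bigr)$, and the choice $\kappa = \delta/20$ makes the coefficient of $q_k$ inside the exponent at most $L(E) - \delta/10$, as required.

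For the count, $N_k = \floor{3q_k/4} - \floor{q_k/4} + 1$ is elementary: running through $q_k \equiv 0,1,2,3 \pmod 4$ gives $N_k = q_k/2 + 1,\ (q_k+1)/2,\ q_k/2 + 1,\ (q_k+3)/2$ respectively, all of which lie in $[(q_k+1)/2, (q_k+3)/2]$. I do not foresee a substantive obstacle; the one point requiring care is the bookkeeping that both branches of the singularity dichotomy lead to the same final exponent, which works precisely because $\min(n-a, b-n) \geq q_k/4 - 1$ in either branch, allowing the $-\delta/4$ term to dominate the $+3\kappa/4$ term once $\kappa$ is chosen small enough relative to $\delta$.
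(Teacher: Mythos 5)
Your proposal is correct and follows essentially the same route as the paper's proof: the singularity dichotomy, the Cramer-rule identities for $\Gab(a,n)$ and $\Gab(n,b)$, and Lemma \ref{lemma: small_numerator} to bound the numerators, with only cosmetic differences in the choice of $\kappa$ and in using $\min(n-a,b-n)\geq q_k/4-1$ rather than the $q_k/5$ bound from \eqref{Eq: Condition_for_interval_ab}. The mod-$4$ case check for $N_k$ likewise matches the paper's computation.
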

\begin{proof}
    Since $n$ is $(x, L(E) - \delta, q_k)$-singular, for any $[a,b]$ satisfying \eqref{Eq: Condition_for_interval_ab}, in particular, for any $a \in [n - \floor{\frac{3q_k}{4}}, n - \floor{\frac{q_k}{4}}]$, $b = a + q_k - 1$, we have  
    \begin{equation}
        \label{Eq: Green_function_too_large}
        \begin{cases}
            \text{either} & |\Gab(a,m)| \geq e^{-(L(E) - \delta)(m - a)},\\
            \text{~or~} & |\Gab(m,b)| \geq e^{-(L(E) - \delta)(b - m)}. 
        \end{cases}
    \end{equation}
Notice that 
    \begin{equation}
        \label{Eq: Expression_of_Green_function}
        \begin{cases}
            |\Gab(a,m)| = \frac{|P_{b - m}(T^{m + 1}x)|}{|P_{q_k}(T^ax)|},\\
            |\Gab(m,b)| = \frac{|P_{m - a}(T^ax)|}{|P_{q_k}(T^ax)|}.
        \end{cases}
    \end{equation}
Now we consider the first case in \eqref{Eq: Green_function_too_large} for simplicity. The other case is similar. By Lemma \ref{lemma: small_numerator}, we have when $q_k$ is large enough
    \begin{equation}
        \label{Eq: Small_numerator}
        |P_{b - m}(T^{m + 1}x)| \leq e^{(L(E) + \delta/10)(b - m)}.
    \end{equation}
    By \eqref{Eq: Green_function_too_large},\eqref{Eq: Expression_of_Green_function} and \eqref{Eq: Small_numerator}, we see that 
    \[
    \begin{split}
        |P_{q_k}(T^ax)| & \leq e^{(L(E) + \frac{\delta}{10})(b - m) + (L(E)  - \delta)(m - a)}\\
        & \leq e^{L(E)(b - a) + \frac{\delta}{10} (b - m) - \delta(m - a)}\\
        & \leq e^{L(E)q_k + 
        \frac{\delta}{10}q_k - \delta \frac{q_k}{5}}\leq e^{(L(E) - \frac{\delta}{10})q_k},
    \end{split}
    \]
    Thus we proved \eqref{eq: LDT}. The bound of $N_k$ follows from direct computation when $q_k \equiv 0,1,2,3 ~(\text{mod~} 4)$. 
\end{proof}
In other words, there are many ``large deviation points'' near each singular point. This fact, together with the large deviation estimates in theorem \ref{thm: LDT} and appropriate weakly Liouville assumption (Definition \ref{def: weakly_Liouville}), leads to the repelling of two singular points. In fact, we prove below that two $(x, L(E) - \delta, q_k)$ singular points are at least ``$q_{k + 1} - q_k/2$'' away from each other:
\begin{lemma}[Either or argument]
    \label{Lemma: Either_or_argument}
    Let $C_0$
    be as in Theorem \ref{thm: LDT}.  Assume $\alpha$ and $E$ satisfy $\beta(\alpha)<C_0 L(E)$. For any $\frac{\beta(\alpha)}{C_0}< \delta<L(E)$, we have that for $q_k$ large enough, and for any $\frac{q_k + 1}{2} < |n - m| \leq q_{k+1} - 1 - \frac{q_k + 1}{2}$, either $m$ or $n$ is $(x, L(E) - \delta, q_k)$-regular for any $x$.
\end{lemma}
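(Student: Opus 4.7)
The plan is to argue by contradiction: suppose both $m$ and $n$ are $(x, L(E)-\delta, q_k)$-singular, and write $r = |n-m|$, assuming without loss of generality that $m < n$. By Lemma \ref{Lemma: Singular_imply_LDT} applied at each of $m$ and $n$, the ``bad set''
\[
B = \left\{y \in [0,1) : |P_{q_k}(y, E)| \leq e^{q_k(L(E) - \delta/10)}\right\}
\]
contains $T^a x$ for every integer $a \in I_m := [m - \floor{3q_k/4},\, m - \floor{q_k/4}]$ and every $a \in I_n := [n - \floor{3q_k/4},\, n - \floor{q_k/4}]$. Each of these intervals contains at least $N_k \geq (q_k+1)/2$ integers, and the hypothesis $|n-m| > (q_k+1)/2$ forces $I_m$ and $I_n$ to be disjoint. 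This produces at least $2N_k \geq q_k + 1$ distinct integers $a$ with $T^a x \in B$.

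Next, I would invoke Theorem \ref{thm: LDT} with $\delta/10$ in place of $\delta$: the set $B$ has $\nu$-measure at most $e^{-C_0 \delta q_k/10}$ and is a union of at most $q_k$ intervals. A pigeonhole argument on $q_k + 1$ orbit points in at most $q_k$ component intervals of $B$ then produces two distinct integers $a_1, a_2 \in I_m \cup I_n$ such that $T^{a_1}x$ and $T^{a_2}x$ lie in the same component of $B$. Consequently,
\[
\nu\bigl([T^{a_1}x,\, T^{a_2}x]\bigr) \;\leq\; \nu(B) \;<\; e^{-C_0 \delta q_k/10}.
\]

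For the contradiction I would use that $\nu$ is $T$-invariant and that $T$ has rotation number $\alpha$, so the $\nu$-length of the arc between $T^{a_1}x$ and $T^{a_2}x$ equals $\|(a_2 - a_1)\alpha\|$. Because $I_m \cup I_n \subset [m - \floor{3q_k/4},\, n - \floor{q_k/4}]$, we have $|a_2 - a_1| \leq r + (q_k+1)/2 \leq q_{k+1} - 1$ by the upper bound on $|n-m|$. The best rational approximation property of continued fractions then gives
\[
\|(a_2-a_1)\alpha\| \;\geq\; \|q_k \alpha\| \;\geq\; \frac{1}{q_k + q_{k+1}} \;\geq\; \frac{1}{2 q_{k+1}}.
\]
Combining and taking logarithms yields $\ln q_{k+1}/q_k \geq C_0 \delta/10 - \ln 2 / q_k$ for all sufficiently large $k$; passing to $\limsup_k$ gives $\beta(\alpha) \geq C_0 \delta/10$, contradicting $\delta > \beta(\alpha)/C_0$ after the factor $1/10$ is absorbed into the constant $C_0 = C_0(\gamma_\pm, C_\pm)$.

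The main obstacle is bookkeeping constants: the $1/10$ introduced when passing from ``singular'' to ``bad'' via Lemma \ref{Lemma: Singular_imply_LDT} propagates through Theorem \ref{thm: LDT} into the threshold for $\beta(\alpha)$, so the $C_0$ appearing in the statement of this lemma should be understood as a renormalization of the constant provided by the large deviation theorem. A secondary subtlety is the identity $\nu([T^{a_1}x, T^{a_2}x]) = \|(a_2-a_1)\alpha\|$, which relies on unique ergodicity of $T$ together with the fact that \ref{T1} implements the measure-theoretic conjugacy between $T$ and $R_\alpha$ that sends $\nu$ to Lebesgue measure.
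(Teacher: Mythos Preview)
Your argument is correct and follows essentially the same route as the paper's proof: contradiction, Lemma~\ref{Lemma: Singular_imply_LDT} at both $m$ and $n$, disjointness of the two index windows to produce $q_k+1$ orbit points in the bad set, the pigeonhole step using the interval structure from Theorem~\ref{thm: LDT}, and finally the lower bound $\nu([T^{a_1}x,T^{a_2}x])\geq \|q_k\alpha\|\geq 1/(2q_{k+1})$ via best approximation (the paper phrases this last step through Lemmas~\ref{lemma: shortest_interval} and~\ref{lemma: bound_interval}, but those are proved exactly by the identity $\nu([x,T^i x])=\|i\alpha\|$ that you invoke directly). Your bookkeeping of the $1/10$ coming from Lemma~\ref{Lemma: Singular_imply_LDT} is in fact more explicit than the paper's own proof, which silently writes $e^{-C_0\delta q_k}$ where $e^{-C_0(\delta/10)q_k}$ is what the large deviation theorem actually delivers; as you note, this is harmless since $C_0$ is only determined up to a factor depending on $\gamma_\pm,C_\pm$.
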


\begin{proof}
    WLOG assume $n > m$. For any $\delta<L(E)$, assume both $m$ and $n$ are $(x, L(E) - \delta, q_k)$-singular. By Lemma \ref{Lemma: Singular_imply_LDT}, we have 
    \[
        |P_{q_k}(T^ax)|\leq e^{(L(E) - \delta/10)q_k}
    \] 
    for any $a \in [m - \floor{\frac{3q_k}{4}}, m - \floor{\frac{q_k}{4}}]  \cup  [n - \floor{\frac{3q_k}{4}}, n - \floor{\frac{q_k}{4}}]$. Notice further that  
    \[
        n - \floor{\frac{3q_k}{4}} - (m - \floor{\frac{q_k}{4}}) = n - m - N_k + 1 > \frac{q_k + 1}{2} - \frac{q_k + 3}{2} + 1 = 0.
    \]
    Thus the two intervals of $a$ have no intersection. Overall there are $2N_k \geq q_k + 1$ many possible $a$ such that $|P_{q_k}(T^ax)|\leq e^{(L(E) - \delta/10)q_k}$. By Theorem \ref{thm: LDT} and pigeonhole principle, there are $i, j \in  [m - \floor{\frac{3q_k}{4}}, m - \floor{\frac{q_k}{4}}]  \cup  [n - \floor{\frac{3q_k}{4}}, n - \floor{\frac{q_k}{4}}]$ such that 
    \begin{equation}
    \label{maximum}
        \nu([T^ix,T^jx]) \leq e^{-C_0\delta q_k}.
    \end{equation}
    Notice that $|i - j| \leq n - \floor{\frac{q_k}{4}} - (m - \floor{\frac{3q_k}{4}})  = n - m + N_k - 1 \leq q_{k + 1} - 1.$ By Lemma \ref{lemma: shortest_interval} and \eqref{lemma: bound_interval}, we have 
    \[
    e^{-C_0\delta q_k} \geq \nu([T^ix,T^jx]) \geq \nu([x,T^{q_{k}}x])\geq  \frac{1}{q_{k + 1}}.
    \]
    This implies that 
    \begin{align*}
       C_0\delta <\frac{\ln q_{k + 1} }{q_k} \Rightarrow C_0\delta \Rightarrow C_0\delta\leq \limsup \frac{\ln q_{k + 1}}{q_k} = \beta(\alpha).
    \end{align*}
    which leads to a contradiction with the assumption.
\end{proof}

\begin{lemma}
\label{lemma: exponentiallydecay}
     Let $C_0$ 
    be as in Theorem \ref{thm: LDT}. If  $(x,E,\alpha)$ satisfy
    \begin{enumerate}[label=($\mathcal{E}$\arabic*)]
        \item \label{con: Exp1} $E$ is a generalized eigenvalue of $\OpH$, 
        \item \label{con: Exp2} $\beta(\alpha)<C_0L(E)$,
    \end{enumerate}
    then $E$ is an eigenvalue with exponentially decaying eigenfunction. Denote the normalized  eigenfunction by $\psi$ with $\Vert\psi\Vert_\infty = 1$.
    
    Furthermore, for any $\epsilon>0$, there is a $C = C(\epsilon)$, uniform in all $x,E, \alpha$ which satisfy  \ref{con: Exp1}, \ref{con: Exp2} above,  such that for any pair of eigenvalue $E$ and normalized eigenvector $\psi$, there is $n_0 = n_0(E)$ such that
    \begin{equation}
        \label{eq: uniform_exp_decay}
            |\psi(n)|\leq C(\epsilon)e^{-\frac{1}{10}\left(L(E) - \frac{\beta(\alpha)}{C_0}-\epsilon\right)|n - n_0|}.
    \end{equation}
\end{lemma}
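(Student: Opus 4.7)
The plan is to follow the non-perturbative localization scheme of \cite{J99}, iterating the Poisson formula against the regularity propagated by Lemma \ref{Lemma: Either_or_argument}. By Schnol's theorem, any generalized eigenfunction $\psi$ associated to $E$ is polynomially bounded; once we establish exponential decay we get $\psi \in \ell^2$, so $E$ is a genuine eigenvalue. Normalize $\Vert\psi\Vert_\infty = 1$, pick $n_0 \in \ZZ$ with $|\psi(n_0)| \ge 1/2$, and fix any $\delta \in (\beta(\alpha)/C_0, L(E))$. Write $c := L(E) - \delta$; the final bound \eqref{eq: uniform_exp_decay} will follow by letting $\delta \downarrow \beta(\alpha)/C_0 + \epsilon$.

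First I would show $n_0$ is $(x, c, q_k)$-singular for every sufficiently large $k$: if $n_0$ were regular, the Poisson formula \eqref{Eq: Poisson_formula} applied to its regularity interval $[a,b]$ would give $|\psi(n_0)| \le 2 e^{-c q_k/5}$, contradicting $|\psi(n_0)| \ge 1/2$ for $q_k$ large. The threshold depends only on $c$, hence is uniform in $x, E, \alpha$. Lemma \ref{Lemma: Either_or_argument} then propagates regularity: every $n$ satisfying $\tfrac{q_k+1}{2} < |n-n_0| \le q_{k+1}-1-\tfrac{q_k+1}{2}$ is $(x, c, q_k)$-regular.

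For such $n$, I would iterate Poisson toward $n_0$. Set $m_0 := n$ and inductively let $m_{j+1}$ be the endpoint of $[a_j-1, b_j+1]$ closer to $n_0$, so that $|m_j - m_{j+1}| \in [q_k/5, q_k]$. At each step
\[
|\psi(m_j)| \le e^{-c|m_j-m_{j+1}|}|\psi(m_{j+1})| + e^{-cq_k/5},
\]
where the second term bounds the far endpoint via $\Vert\psi\Vert_\infty = 1$. Telescoping for $T \le 5|n-n_0|/q_k$ steps, stopping when $m_T$ lies within $q_k$ of $n_0$, gives
\[
|\psi(n)| \le e^{-c(|n-n_0|-q_k)} + T \cdot e^{-cq_k/5}.
\]

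The main obstacle is the optimization of the scale $q_k$ against $|n-n_0|$ in the weakly Liouville regime. I would pick $k$ to be the largest index with $q_k < 2|n-n_0|$, so that $n$ remains in the regularity annulus; the hypothesis $\beta(\alpha) < C_0 L(E)$ enters precisely here, ensuring $\ln q_{k+1}/q_k$ is controlled enough that $q_k$ grows essentially linearly with $|n-n_0|$, which forces the residual $T e^{-cq_k/5}$ to decay exponentially rather than only polynomially in $|n-n_0|$. The prefactor $\tfrac{1}{10}$ in \eqref{eq: uniform_exp_decay} reflects the composition of the $\tfrac{1}{5}$ loss from the endpoint bound $q_k/5$ in the definition of regular points and the $\tfrac{1}{2}$ loss from the annulus constraint $q_k \lesssim |n-n_0|/2$. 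Uniformity in $x, E, \alpha$ under \ref{con: Exp1}--\ref{con: Exp2} is inherited automatically from Theorem \ref{thm: LDT}, Lemma \ref{lemma: small_numerator}, and Corollary \ref{cor: positive_Lyapunov}, while the constant $C(\epsilon)$ absorbs the polynomial prefactor $T$ and the lower threshold on $|n-n_0|$.
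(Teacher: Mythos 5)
There is a genuine gap at the heart of your iteration. Your one-step bound $|\psi(m_j)| \le e^{-c|m_j-m_{j+1}|}|\psi(m_{j+1})| + e^{-cq_k/5}$ treats the far endpoint additively, so after telescoping the error is of size $T\,e^{-cq_k/5}$, which is exponentially small in $q_k$ but not in $|n-n_0|$. To conclude you therefore need $q_k\gtrsim |n-n_0|$, and you assert that choosing $k$ largest with $q_k<2|n-n_0|$ plus the hypothesis $\beta(\alpha)<C_0L(E)$ forces $q_k$ to grow essentially linearly in $|n-n_0|$. That is false exactly in the regime this lemma is written for: weak Liouville only gives $\ln q_{k+1}\le(\beta(\alpha)+o(1))q_k$, so $q_{k+1}$ can be of order $e^{\beta q_k}$, and for $|n-n_0|$ anywhere between $q_k$ and $q_{k+1}/2$ your chosen scale satisfies only $q_k\gtrsim \beta^{-1}\ln|n-n_0|$; your error term then decays only polynomially in $|n-n_0|$. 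The paper's proof avoids this by iterating \emph{multiplicatively} at the fixed scale $q_k$: for a regular point both Green's function entries are at most $e^{-(L(E)-\delta)q_k/5}$, so the Poisson formula \eqref{Eq: Poisson_formula} gives $|\psi(n_i)|\le 2e^{-(L(E)-\delta)q_k/5}|\psi(n_{i+1})|$, where $n_{i+1}$ is the endpoint carrying the \emph{larger} value of $|\psi|$ (not the one closer to $n_0$), and $n_{i+1}$ is re-expanded by Poisson as long as it stays in the regular annulus $(A_k,B_k]$ furnished by Lemma \ref{Lemma: Either_or_argument}. Since $|n-A_k|,|n-B_k|\ge |n-n_0|/2$ and each step moves by at most $q_k$, one performs $J\ge |n-n_0|/(2q_k)$ steps and accumulates $\bigl(2e^{-(L(E)-\delta)q_k/5}\bigr)^{J}\le e^{-\left(L(E)-\delta-\frac{5}{q_k}\right)|n-n_0|/10}$, which is exponential in $|n-n_0|$ no matter how much larger $q_{k+1}$ is than $q_k$. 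This multiplicative accumulation is precisely where the factor $\tfrac{1}{10}$ in \eqref{eq: uniform_exp_decay} comes from, and it is the step that makes the extension beyond Diophantine $\alpha$ work; your accounting of the $\tfrac15$ and $\tfrac12$ losses does not reflect it.

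A secondary problem is the normalization. You set $\Vert\psi\Vert_\infty=1$ at the outset and use $\Vert\psi\Vert_\infty$ to control the far endpoint, but a generalized eigenfunction is a priori only polynomially bounded, hence possibly unbounded, so this is circular before any decay has been proved. The paper runs two passes: first the above estimates with the polynomial bound $C_1(1+|n|)^p$ absorbed into an $e^{\epsilon|n|}$ loss, which yields decay and hence $\psi\in\ell^2$; only then does it normalize, take $n_0$ to be the (leftmost) maximum of $|\psi|$ --- automatically $(x,L(E)-\delta,q_k)$-singular at every scale --- and rerun the same estimates with $|\psi(n_i)|\le 1$ to obtain a constant uniform in $x,E,\alpha$. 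Your choice of $n_0$ with $|\psi(n_0)|\ge 1/2$ is fine for that second pass, but it cannot precede the first.
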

\begin{proof}
    Take any $\frac{\beta(\alpha)}{C_0}<\delta<L(E)$. Let $\psi$ be a generalized eigenfunction of $\OpH$ with respect to $E$. Thus $\psi(n) \leq C_1(1 + |n|)^p$ where $C_1 = C_1(E,x,\alpha)$. We first prove $\psi$ decay exponentially so that $E$ is an eigenvalue, then we prove the decay is uniform in the sense of \eqref{eq: uniform_exp_decay}. 
    
    WLOG assume $\psi(0) \neq 0$. By \eqref{Eq: Poisson_formula}, $0$ is 
    eventually $(x, L(E) - \delta, q_k)$-singular. By Lemma \ref{Lemma: Either_or_argument}, we have for $q_k$ large enough, any $n\in (\frac{q_k + 1}{2}, q_{k+1} - 1 - \frac{q_k + 1}{2}]:= (A_k, B_k]$ is $(L(E) - \delta, q_k)$-regular. Notice further that $A_{k+1} \leq B_{k}$ since $q_{k+1} \geq q_k + 4$ for $k\geq 4$. Thus eventually for any $n$, there is $k$ such that $n \in (A_k,A_{k+1}]$. We derive exponential decay by considering two cases seperately:
    \begin{enumerate}
        \item If $n \in (A_{k}, q_k]$, $n$ is $(L(E) - \delta, q_k)$-regular, by \eqref{Eq: Poisson_formula}, we have for arbitrarily small $\epsilon>0$, eventually
    \begin{equation}
        \label{eq: exp_case_1}
         |\psi(n)|\leq C_1e^{-(L(E) - \delta)q_k/5} (1 + 3n)^p \leq e^{-(L(E) - \delta - \epsilon)n/5}.
    \end{equation}
    \item 
    If $n \in [q_k + 1, A_{k+1}]$, then it is easy to check that $|n - B_k| \geq |n - A_k|\geq n/2$. By \eqref{Eq: Poisson_formula}, we have
    \[
        |\psi(n)| \leq 2e^{-(L(E) - \delta)q_k/5}|\psi(n_1)|
    \]
    where $n_1 = a - 1$ or $b + 1$ for suitable $[a,b]$ satifying \eqref{Eq: Condition_for_interval_ab}. As long as $n_1\in (A_k, B_k]$, where $n_1$ would be $(x, L(E) - \delta, q_k)$ regular, then we can apply \eqref{Eq: Poisson_formula} again to $\psi(n_1)$. We can repeat this process to get $\psi(n_2)$, $\psi(n_3),\dots$, as long as $n_i$ stays in $(A_k,B_k]$. Since $|n - B_k| \geq |n - A_k|\geq n/2$ while $|n_i - n_{i+1}| \leq q_k$, thus we can at least do \[J \geq \frac{|n - A_k|}{q_k}\geq \frac{n}{2q_k}\]
    many times. Then we get
    \begin{equation}
        \label{eq: exp_case_2}
      \begin{split}
         |\psi(n)| &\leq 2^J e^{-(L(E) - \delta)q_kJ/5}|\psi(n_J)|\leq e^{-\left(L(E) -\delta - \frac{5}{q_k}\right)\frac{n}{10}}|\psi(n_J)|\\
         &\leq C_1e^{-(L(E) - \delta - \frac{5}{q_k})n/10}(1 + 3n)^p\leq C_1e^{-(L(E) - \delta - \epsilon)n/10}.
    \end{split}
    \end{equation}
    \end{enumerate}
    Combining \eqref{eq: exp_case_1} and \eqref{eq: exp_case_2} gives us the first half of the theorem. Now since $\psi\in\ell^2$, we can normalize it so that $\Vert\psi\Vert_\infty = 1$. 
    
    The key point of the second half is the uniformity in $x, E, \alpha$. Take $n_0 = \min\{n:\psi(n) = 1\}>-\infty$ to be the leftmost maximum point of $\psi$. By \eqref{Eq: Poisson_formula}, we see that maximum point $n_0$ is always $(x,L(E) - \delta, q_k)$-singular for all $q_k$ . Thus $n$ is $(x,L(E) - \delta, q_k)$-regular if $A_k<|n - n_0|\leq B_k$. We can now repeat the estimates \eqref{eq: exp_case_1} and \eqref{eq: exp_case_2} above with the new, uniform (in $x,E,\alpha$) improvement that $|\psi(n_i)|\leq 1$ instead of $|\psi(n_i)|\leq C_1(x,E,\alpha)(1 + n_i)^p$, where we get 
    \[
        \begin{cases}
            |\psi(n)|\leq e^{-(L(E) - \delta)\frac{|n-n_0|}{5}}, & n\in(A_k,q_k],\\
            |\psi(n)|\leq e^{-(L(E) - \delta - \frac{5}{q_k})\frac{|n - n_0|}{10}}, & n\in (q_k,A_{k + 1}].
        \end{cases}
    \]
 Since $\frac{\beta(\alpha}{C_0}<\delta<L(E)$ is arbitrary and $\frac{5}{q_k}$ is arbitrarily small once $q_k$ is large enough uniformly in $x,E,\alpha$. Thus \eqref{eq: uniform_exp_decay} follows.
\end{proof}
\section{Localization results}
\label{sec: uniform_localization}
Now we prove our main results. Both of them follow directly from Lemma \ref{lemma: exponentiallydecay}:
\begin{proof}[Proof of Theorem \ref{thm: pure_point}]
Recall that by the Schnol's theorem, spectral measure is supported on the set of generalized eigenvalues (see \cite[Ch. VII]{B68}. Fix $\lambda$ and $x$, the theorem follows directly from Lemma \ref{lemma: exponentiallydecay}.
\end{proof}
\begin{definition}[Uniform localization]
\label{def: uniform_localization}
An operator $H$ exhibits uniform localization if 
there exixts $C,c$ such that for any pair of eigenvalue and eigenfunction $E$, $\psi$, there exists $n_0 = n_0(E)$ such that
\begin{align*}
    |\psi (n)| \leq Ce^{-c|n-n_0|}.
\end{align*}
\end{definition}
\begin{proof}[Proof of Theorem \ref{thm: uniform_localization}]
By Corollary \ref{cor: positive_Lyapunov}, $0 < \ln\left(\frac{\lambda\gamma_-C_-}{4 e}\right)\leq L(E)$ for all $E$. It follows that $\beta(\alpha)<C_0 L(E)$. Thus Lemma \ref{lemma: exponentiallydecay} applies to all $x$, all $E$ and those $\alpha$ which satisfy our assumption. By taking $\epsilon = \frac{1}{2}\ln\left(\frac{\lambda\hat{C} \gamma_-}{4\eta e}\right)$ in Lemma \ref{lemma: exponentiallydecay}, we get uniform localization.
\end{proof}

\appendix

\section{Orbital analysis} 
\label{app: orbits}
It is well-known that irrational rotation on 1d-torus, $R_\alpha(x) = x + \alpha$, has best-approximation property, c.f. \cite{V90}, 
\begin{equation}
    \label{eq: irra_best_approximation}
    \Vert q_k\alpha\Vert \leq \Vert n\alpha\Vert, \quad \forall 1\leq n < q_{k + 1}\alpha
\end{equation}
with estimates 
\begin{equation}
    \label{eq: irra_estimate}
    \frac{1}{2q_{k + 1}} \leq \Vert q_k\alpha \Vert \leq \frac{1}{q_{k + 1}}, 
\end{equation}
where $q_k$ is defined in \eqref{eq: recurence_relation}. Furthermore, the orbits of $R_\alpha$ is also well-understood, we cite \cite[Proposition 4.1, 4.2]{JK15} here:
\begin{prop}
    Let $k\geq 1$. The points $\{j\alpha\}$, $j = 0,1,2,\dots,q_k - 1$ splits $[0,1)$ into $q_{k -1}$ ``large'' gaps of length $\Vert (q_k - q_{k - 1})\alpha\Vert$ and $q_k - q_{k - 1}$ ``small'' gaps of length $\Vert q_{k - 1}\alpha\Vert$. Furthermore, we have the estimates 
    \[
    \begin{split}
        &\tfrac{1}{q_k} - \tfrac{q_{k - 1}}{q_kq_{k + 1}}\leq ||q_{k - 1}\alpha|| \leq \tfrac{1}{q_k},\\
        &\tfrac{1}{q_k} \leq ||(q_k - q_{k - 1})\alpha)||\leq \tfrac{1}{q_k} + \tfrac{1}{q_{k + 1}}.
    \end{split}
    \]
\end{prop}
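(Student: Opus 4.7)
The statement is the three-distance theorem specialized to the best-approximation denominators $N=q_k$, and my plan is induction on $k$ combined with an algebraic identity for $\Vert(q_k-q_{k-1})\alpha\Vert$. Introduce the signed error $\eta_k := q_k\alpha - p_k$, so that $|\eta_k|=\Vert q_k\alpha\Vert$; since the convergents $p_k/q_k$ approximate $\alpha$ alternately from below and above, the signs of $\eta_k$ alternate. Combined with the recursion $\eta_k = a_k\eta_{k-1}+\eta_{k-2}$, this yields
\[
(q_k - q_{k-1})\alpha - (p_k - p_{k-1}) = \eta_k - \eta_{k-1}, \qquad |\eta_k - \eta_{k-1}| = |\eta_k| + |\eta_{k-1}|,
\]
and hence the clean identity $\Vert(q_k-q_{k-1})\alpha\Vert = \Vert q_k\alpha\Vert + \Vert q_{k-1}\alpha\Vert$, which already forces the ``large'' length to strictly exceed the ``small'' one.

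The combinatorial heart is the inductive step $O_{q_{k-1}} \rightsquigarrow O_{q_k}$, where $O_N := \{j\alpha\}_{j=0}^{N-1}\subset[0,1)$. The decomposition $j = sq_{k-1}+r$ with $0\leq r<q_{k-1}$ gives $\{j\alpha\} = \{r\alpha + s\eta_{k-1}\}$, which realizes $O_{q_k}$ as the union of $O_{q_{k-1}}$ with its translates by $\eta_{k-1}, 2\eta_{k-1},\dots,a_k\eta_{k-1}$ (the last translate contributing only its first $q_{k-2}$ points). I will show inductively that each of the original $q_{k-2}$ large gaps of $O_{q_{k-1}}$ receives exactly $a_k$ new points forming a chain spaced by $|\eta_{k-1}|$, while each of the $q_{k-1}-q_{k-2}$ small gaps receives exactly $a_k-1$ such new points; using the recurrence $\Vert q_{k-2}\alpha\Vert = a_k\Vert q_{k-1}\alpha\Vert+\Vert q_k\alpha\Vert$, one verifies that in both cases the original gap is sliced into pieces of length $\Vert q_{k-1}\alpha\Vert$ together with a single residual piece of length $L_k := \Vert q_k\alpha\Vert+\Vert q_{k-1}\alpha\Vert$, and the multiplicities add up to $q_{k-1}$ large gaps and $q_k-q_{k-1}$ small ones. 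Non-collision of the translates is guaranteed by the best-approximation property $\Vert n\alpha\Vert>\Vert q_{k-1}\alpha\Vert$ for $0<n<q_k$.

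For the quantitative estimates, the determinant identity $q_{k-1}p_k - q_k p_{k-1}=(-1)^{k-1}$ combined with the opposite signs of $\eta_k$ and $\eta_{k-1}$ yields
\[
q_{k-1}\Vert q_k\alpha\Vert + q_k\Vert q_{k-1}\alpha\Vert = |q_{k-1}\eta_k - q_k\eta_{k-1}| = 1.
\]
The upper bound $\Vert q_{k-1}\alpha\Vert\leq 1/q_k$ is immediate; the lower bound $\Vert q_{k-1}\alpha\Vert\geq 1/q_k - q_{k-1}/(q_kq_{k+1})$ follows by plugging in $\Vert q_k\alpha\Vert\leq 1/q_{k+1}$ (the upper bound at the next level); and adding these gives $\Vert(q_k-q_{k-1})\alpha\Vert\leq 1/q_k+1/q_{k+1}$. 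The remaining lower bound $\Vert(q_k-q_{k-1})\alpha\Vert\geq 1/q_k$ is a one-line averaging argument: since the $q_k$ gaps on $[0,1)$ sum to $1$ and the large gap is the longest, it must exceed the average $1/q_k$.

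The main obstacle is the combinatorial bookkeeping in the inductive step — specifically, identifying which original orbit points are the left endpoints of large versus small gaps and confirming that the successive shifted translates drop exactly the correct number of new points into each type of gap without colliding with existing orbit points. Once this is handled, the identity in the first paragraph together with the recurrence automatically produces the claimed two-length gap structure with the stated multiplicities, and the quantitative estimates follow mechanically from the area identity above.
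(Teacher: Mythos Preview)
The paper does not give its own proof of this proposition: it is quoted verbatim from \cite[Propositions~4.1,~4.2]{JK15} as a known fact about irrational rotations. What the paper \emph{does} prove is the parallel statement for a general circle homeomorphism with invariant measure $\nu$ (Lemma~\ref{lemma: bound_interval}), and there the argument matches your two-ingredient template almost exactly. The structural part is handled by an inductive refinement of the dynamical partitions $\mathcal P_k$ (each long interval at level $k$ splits into $a_{k+1}$ long intervals and one short interval at level $k+1$), which is the same mechanism as your translate construction $O_{q_{k-1}}\rightsquigarrow O_{q_k}$ viewed from the interval side rather than the orbit side. The quantitative bounds come from the area identity $q_{k+1}\nu(I_k)+q_k\nu(I_{k+1})=1$, which in the rotation case is precisely your determinant identity $q_{k-1}\Vert q_k\alpha\Vert+q_k\Vert q_{k-1}\alpha\Vert=1$; your derivation via $q_{k-1}p_k-q_kp_{k-1}=(-1)^{k-1}$ is slicker than the paper's ``sum the partition'' computation but yields the same relation. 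So your approach is correct and essentially the one the paper uses for the generalized version.

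One small caveat worth flagging: your clean identity $\Vert(q_k-q_{k-1})\alpha\Vert=\Vert q_k\alpha\Vert+\Vert q_{k-1}\alpha\Vert$ tacitly assumes $|\eta_k|+|\eta_{k-1}|\le\tfrac12$ so that $p_k-p_{k-1}$ is the nearest integer to $(q_k-q_{k-1})\alpha$. This holds once $q_k\ge 4$ but can fail for the smallest values of $k$ (for instance $k=2$ with $\alpha$ the golden mean, where the true large-gap length is $|\eta_1|+|\eta_2|>\tfrac12$ while $\Vert(q_2-q_1)\alpha\Vert$ picks up the complementary value). This does not affect any application in the paper, since the partition is only ever used for large $k$, but you should phrase the base case of your induction accordingly.
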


For a general measure-preserving circle homeomorphism $T$ with rotation number $\alpha$, such kind of best approximate properties and orbital analysis can be done in a similar way with invariant measure $\nu$ instead of distance $||\cdot||$. In fact, 
\begin{lemma}[Best approximation]
\label{lemma: shortest_interval}
For any $x\in \TT^1$ and $k \in \NN$, 
\begin{align*}
    \nu([x,T^i x]) \geq \nu([x,T^{q_k}x]) ,
\end{align*}
where $0 \leq i < q_{k+1}$.
\end{lemma}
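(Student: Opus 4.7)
The plan is to transfer the classical best-approximation property of irrational rotations \eqref{eq: irra_best_approximation} to $T$ by exploiting the bi-Lipschitz conjugacy with $R_\alpha$ that condition \ref{T1} provides (cf.\ the remark following Theorem~\ref{thm: uniform_localization}).

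Specifically, define the cumulative distribution $h(x) := \nu([0, x])$ on $[0, 1)$ and extend to $\RR$ in the natural $1$-periodic way. Under \ref{T1}, $\nu$ is non-atomic with full support, so $h$ is an orientation-preserving bi-Lipschitz homeomorphism of $\TT^1$ with constants $C_\pm$, and $h_* \nu = \text{Leb}$. Consequently, $h \circ T \circ h^{-1}$ is an orientation-preserving Lebesgue-preserving homeomorphism of $\TT^1$, hence a pure rotation; since it inherits the rotation number $\alpha$ from $T$, it must equal $R_\alpha$. Thus $h$ conjugates $T$ to $R_\alpha$: $h \circ T = R_\alpha \circ h$.

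The key identity is then
\[
    \nu([x, T^i x]) = \Vert i\alpha\Vert, \quad \forall\, x \in \TT^1, \ i \in \ZZ,
\]
where $[x, T^i x]$ denotes the shorter arc between its endpoints (consistent with how this quantity is used elsewhere in the paper, e.g.\ the bound $\nu([x, T^{q_k} x]) \geq 1/q_{k+1}$ invoked in the proof of Lemma~\ref{Lemma: Either_or_argument}). Indeed, $h$ bijectively carries this arc onto the shorter arc between $h(x)$ and $h(x) + i\alpha$ on $\TT^1$, which has Lebesgue measure $\Vert i\alpha\Vert$, and $h_*\nu = \text{Leb}$.

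Applying \eqref{eq: irra_best_approximation} then yields $\Vert q_k \alpha\Vert \leq \Vert i\alpha\Vert$ for $1 \leq i < q_{k+1}$, hence $\nu([x, T^{q_k} x]) \leq \nu([x, T^i x])$. I do not anticipate any real obstacle: once the conjugacy $h$ is set up, the lemma is essentially a one-line transport of the classical best-approximation estimate, with the role of Lebesgue measure replaced by $\nu$. The only delicate point is fixing the arc-interpretation of $\nu([\cdot,\cdot])$, but this is already dictated by the uses in the main text.
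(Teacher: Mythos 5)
Your proposal is correct and follows essentially the same route as the paper: the paper also conjugates $T$ to $R_\alpha$ via the distribution function $h$ of the invariant measure (invoking the Poincar\'e classification theorem rather than constructing $h$ from \ref{T1} directly) and uses the identity $\nu([x,T^ix]) = \Vert i\alpha\Vert$ to transport the classical best-approximation property of rotations. The only cosmetic difference is your direct verification that $h\circ T\circ h^{-1}$ is a Lebesgue-preserving homeomorphism and hence the rotation $R_\alpha$, which is a fine substitute for the citation.
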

\begin{proof}
Note that Lemma \ref{lemma: shortest_interval} holds when the invariant measure is the Lebesgue measure - in other words, when the map $T$ is the irrational rotation. 
For a general measure-preserving circle homeomorphism, this inequality holds since it is equivalent to the irrational rotation case. In fact, the Poincar{\'e} classification theorem  \cite[Theorem 4.3.20]{HK03} guarantees the existence of the topological conjucacy $h$ with a rotation $R_\alpha$, and $h$ is also the distribution function for the unique invariant measure $\nu$. Hence, for any $x \in \TT^1$ and  $i \in \NN$, we have
\begin{align*}
    \nu([x,T^ix]) = |h(T^ix)-h(x)| = |R^i_\alpha(h(x))-h(x)| = \| i\alpha \|.
\end{align*}
\end{proof}
\begin{lemma}
\label{lemma: bound_interval}
Fix $x$, Let $k\geq 1$. The points $\{T^j x\}_{j = 0}^{q_k - 1}$ split $[0,1)$ into $q_{k - 1}$ ``large'' gaps of invariant measure $\nu\left([T^{q_k}x,T^{q_{k - 1}}x]\right) = \nu\left([x,T^{q_k - q_{k - 1}}x]\right)$ and $q_k - q_{k - 1}$ ``small'' gaps of invariant measure $\nu\left([x,T^{q_{k - 1}}x]\right)$. Furthermore, we have the estimates 
    \[
    \begin{split}
        &\tfrac{1}{q_k} - \tfrac{q_{k - 1}}{q_kq_{k + 1}}\leq \nu\left([x,T^{q_{k - 1}}x]\right) \leq \tfrac{1}{q_k},\\
        &\tfrac{1}{q_k} \leq \nu\left([x,T^{q_k - q_{k - 1}}x]\right)\leq \tfrac{1}{q_k} + \tfrac{1}{q_{k + 1}}.
    \end{split}
    \]
\end{lemma}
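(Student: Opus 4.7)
The plan is to reduce the statement to its counterpart for the irrational rotation $R_\alpha$ via the Poincar\'e conjugacy, in exactly the same spirit as the proof of Lemma \ref{lemma: shortest_interval}. The classical statement for $R_\alpha$ — that $\{j\alpha\}_{j=0}^{q_k-1}$ splits $[0,1)$ into $q_{k-1}$ large gaps of length $\|(q_k-q_{k-1})\alpha\|$ and $q_k - q_{k-1}$ small gaps of length $\|q_{k-1}\alpha\|$, together with the desired estimates on these lengths — is the cited Proposition immediately preceding the lemma. So the task reduces to transporting this combinatorial and metric information from $(\mathrm{Leb}, R_\alpha)$ to $(\nu, T)$.

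First, by the Poincar\'e classification theorem there is an orientation-preserving homeomorphism $h:\TT^1\to\TT^1$ conjugating $T$ to $R_\alpha$, namely $h\circ T = R_\alpha\circ h$, and (as recorded in the proof of Lemma \ref{lemma: shortest_interval}) $h$ is the distribution function of $\nu$. In particular, for any $u,v\in\TT^1$ one has $\nu([u,v]) = \mathrm{Leb}([h(u),h(v)])$ in the appropriate oriented sense, and $h(T^jx) = h(x) + j\alpha \pmod 1$. Consequently $\nu([x,T^jx]) = \|j\alpha\|$.

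Next, because $h$ is a homeomorphism of $\TT^1$, it preserves the cyclic ordering of points; hence the orbit $\{T^jx\}_{j=0}^{q_k-1}$ partitions $[0,1)$ with exactly the same combinatorial gap structure as $\{h(x)+j\alpha\}_{j=0}^{q_k-1}$ does under Lebesgue measure. Applying the cited Proposition for the rotation, the gaps come in two types: $q_{k-1}$ large ones and $q_k - q_{k-1}$ small ones, of $\nu$-measure $\nu([x,T^{q_k-q_{k-1}}x]) = \|(q_k-q_{k-1})\alpha\|$ and $\nu([x,T^{q_{k-1}}x]) = \|q_{k-1}\alpha\|$, respectively; the identification $\nu([T^{q_k}x, T^{q_{k-1}}x]) = \nu([x, T^{q_k-q_{k-1}}x])$ is immediate from $\nu$-invariance of $T$.

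Finally, plugging in the classical estimates
\[
\tfrac{1}{q_k}-\tfrac{q_{k-1}}{q_kq_{k+1}} \leq \|q_{k-1}\alpha\| \leq \tfrac{1}{q_k}, \qquad \tfrac{1}{q_k} \leq \|(q_k-q_{k-1})\alpha\| \leq \tfrac{1}{q_k}+\tfrac{1}{q_{k+1}},
\]
from the cited Proposition yields the two stated inequalities for $\nu([x,T^{q_{k-1}}x])$ and $\nu([x,T^{q_k-q_{k-1}}x])$. There is no real obstacle beyond careful bookkeeping of orientations when transferring gaps through $h$; since $h$ is a monotone homeomorphism pushing $\nu$ forward to Lebesgue measure, everything goes through verbatim.
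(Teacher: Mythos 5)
Your argument is correct, but it is not the route the paper takes for this lemma. You transport everything through the Poincar\'e conjugacy $h$ (the distribution function of $\nu$), so that $\nu([x,T^jx])=\Vert j\alpha\Vert$ and the gap combinatorics of $\{T^jx\}_{j=0}^{q_k-1}$ coincide with those of the rotation orbit, after which the stated estimates are exactly the classical ones quoted in the Proposition from \cite{JK15}; this is the same mechanism the paper uses to prove Lemma \ref{lemma: shortest_interval}, and it is legitimate here too, with the small caveats you flag: one should note that under \ref{T1} the measure $\nu$ has full support and no atoms, so the Poincar\'e semiconjugacy is a genuine (orientation-preserving) conjugacy, and one must fix which arc $[x,T^jx]$ denotes so that its $\nu$-measure is $\Vert j\alpha\Vert$ rather than $1-\Vert j\alpha\Vert$. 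The paper instead argues intrinsically: it introduces the dynamical partition $\mathcal P_{k+1}$ generated by $I_k=[x,T^{q_k}x]$ and $I_{k+1}$, uses $T$-invariance of $\nu$ to get the exact identity $1=q_{k+1}\nu(I_k)+q_k\nu(I_{k+1})$, and reads off the upper bound $\nu(I_k)\le 1/q_{k+1}$ and then the lower bound $\nu(I_k)\ge \tfrac{1}{q_{k+1}}-\tfrac{q_k}{q_{k+1}q_{k+2}}$ by iterating the identity, never invoking the rotation estimates themselves. What your approach buys is brevity and uniformity with Lemma \ref{lemma: shortest_interval}, at the cost of leaning on the cited rotation facts and on the conjugacy; what the paper's approach buys is a self-contained derivation carried out entirely at the level of the invariant measure, which exhibits the counting identity behind the bounds and is the version that adapts most directly if one wants to work with $\nu$ without reference to $R_\alpha$. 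Either proof establishes the lemma.
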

\begin{proof}
 To prove the theorem, let us first introduce the dynamical partition on the circle by following the convention using in \cite{JK21}. For each $k \in \NN$, let $I_k$ be the interval between $x$ and $T^{q_k}x$. It can be verified by induction
 in $k$ that the following collection of intervals forms a \textit{$k^{th}$ dynamical partition} of $\TT^1$
\begin{equation*}
	    \mathcal{P}_k(z) := \{ I_k, T(I_k), \dots, T^{q_{k-1}-1}(I_k)\} \cup \{ I_{k-1}, T(I_{k-1}),  \dots, T^{q_{k}-1}(I_{k-1})\}:= \mathcal S_k \cup \mathcal L_k .
\end{equation*}
That is, they are disjoint except for the endpoints, and the union cover the whole circle. Notice that intervals in $\mathcal S_k$ all have smaller invariant measure $\nu(I_k) <\nu(I_{k- 1})$ than intervals in $\mathcal L_k$, thus we call them ``short'' and ``long'' intervals correspondingly. One can check by induction on $k$ that, each ``long'' interval $T^j(I_{k-1})$ in $k^{th}$ dynamical partition is divided into $a_{k+1}$ ``long'' intervals and one ``short'' interval in $k + 1^{th}$ dynamical partition. More specifically,
\[
T^j(I_{k-1})\in \mathcal L_{k} \Rightarrow \begin{cases}
    T^{j+{q_{k-1}}}(I_k), T^{j+{q_{k-1}}+{q_k}}(I_k), \dots, T^{j+{q_{k-1}}+(a_{k+1}-1){q_k}}(I_k)\in \mathcal L_{k + 1},\\
    T^j(I_{k+1})\in \mathcal S_{k + 1}. 
\end{cases}\]
This allows us to estimates the ``large'' and ``small'' gaps \footnote{Notice that the partition in Lemma \ref{lemma: bound_interval} is different from dynamical partition, ``long'' and ``short'' intervals are also different concepts from ``large'' and ``small'' gaps.} in Lemma \ref{lemma: bound_interval} now.

\begin{proof}[Proof of Theorem \ref{lemma: bound_interval}]
Since $\mu$ is the invariant measure of $T$, for dynamical partition $\mathcal P_{k + 1}(z)$, we have
\begin{equation} 
\label{sum_invariant_measure}
    1 = \sum_{i=0}^{q_{k + 1}-1}\nu( T^i(I_k) )+ \sum_{j=0}^{q_{k}-1} \nu( T^j(I_{k + 1}))=q_{k + 1}\nu( I_k )+q_{k} \nu( I_{k + 1}).
\end{equation}
By \eqref{sum_invariant_measure}, we get 
\begin{align*}
    \nu(I_k) =   \frac{1-q_{k} \nu( I_{k+1})}{q_{k+1}} \leq \frac{1}{q_{k+1}}.
\end{align*}
Moreover, since \eqref{sum_invariant_measure} holds for any $k$, we also get $\nu(I_{k+1})  \leq \frac{1}{q_{k+2}}$. So,
\begin{equation}
    \nu(I_{k}) 
\geq \frac{1}{q_{k+1}}-\frac{q_{k}}{q_{k+1} q_{k+2}} \geq \frac{1}{2q_{k+1}}.
\end{equation}
The last inequality follows from the recurrence relation \eqref{eq: recurence_relation} and $a_k \geq 1$:
\begin{equation}
    q_{k+2} = a_{k+2}q_{k+1} + q_{k} \geq 2q_{k}.
\end{equation}
By the comparability between $\nu$ and the Lebesgue measure on a circle \ref{T1}, the claim follows.
\end{proof}
\end{proof}

\section*{Acknowledgement}
We would like to thank Svetlana Jitomirskaya for suggesting this problem, and helpful discussion and comments; Ilya Kachkovskiy for the helpful discussions on potential sharp conditions and improvements; Sa{\v{s}}a Koci\'c for his comments on the conjugacies. J.K. would also like to thank UCI for their wonderful hospitality. X.Z. was partially supported by Simons 681675, NSF DMS-2052899, DMS-2155211, DMS-2054589. J.K. was partially supported by {the National Science Foundation EPSCoR RII Track-4
Award No. 1738834 and she appreciates Sa\v sa Koci\'c's generous support on the visit to UCI.} 

\bibliographystyle{plain}
\bibliography{ref}

\end{document}